\newtheorem{theorem}{\noindent{\it Theorem}}[section]
\newtheorem{remark}[theorem]{\noindent{\it Remark}}
\newenvironment{proof}{\noindent{\it Proof:}}{$\hfill$ $\Box$\\ }
\newtheorem{example}{\noindent{\it Example}}[section]
\begin{document}

\title{Asymmetric quantum convolutional codes}

 \author{Giuliano G. La Guardia
\thanks{Giuliano Gadioli La Guardia is with Department of Mathematics and Statistics,
State University of Ponta Grossa (UEPG), 84030-900, Ponta Grossa,
PR, Brazil. }}

\maketitle
\begin{abstract}
In this paper, we construct the first families of asymmetric quantum
convolutional codes (AQCC)'s. These new AQCC's are constructed by
means of the CSS-type construction applied to suitable families of
classical convolutional codes, which are also constructed here. The
new codes have noncatastrophic generator matrices and they have
great asymmetry. Since our constructions are performed
algebraically, i.e., we develop general algebraic methods and
properties to perform the constructions, it is possible to derive
several families of such codes and not only codes with specific
parameters. Additionally, several different types of such codes are
obtained.
\end{abstract}

\textbf{\emph{Index Terms}} -- \textbf{convolutional codes, quantum
convolutional codes}

\section{Introduction}\label{Intro}

Several works available in literature deal with constructions of
quantum error-correcting codes (QECC, for short) and asymmetric
quantum error-correcting codes (AQECC)
\cite{Calderbank:1998,Nielsen:2000,Ashikmin:2001,Ketkar:2006,Ioffe:2007,LaGuardia:2009,Sarvepalli:2009,LaGuardia:2011,LaGuardia:2012,LaGuardia:2013}.
In contrast with this subject of research one has the theory of
quantum convolutional codes
\cite{Ollivier:2003,Ollivier:2004,Aido:2004,Grassl:2007,Aly:2007,Klapp:2007,Forney:2007,LaGuardia:2014,LaGuardia:2014A}.
Ollivier and Tillich \cite{Ollivier:2003,Ollivier:2004} were the
first to develop the stabilizer structure for these codes. Almeida
and Palazzo Jr. constructed an $[(4, 1, 3)]$ (memory $\mu=3$)
quantum convolutional code \cite{Aido:2004}. Grassl and R\"otteler
\cite{Grassl:2006,Grassl:2007} generated quantum convolutional codes
as well as they provide algorithms to obtain non-catastrophic
encoders. Forney \emph{et al.} constructed rate $(n-2)/n$ quantum
convolutional codes.

An asymmetric quantum convolutional code (AQCC) is a quantum code
defined over quantum channels where qudit-flip errors and
phase-shift errors may have different probabilities. As it is well
known, Steane \cite{Steane:1996} was the first who introduced the
notion of asymmetric quantum errors. The parameters of an AQCC will
be denoted by $[(n, k, \mu; \gamma,
{[d_{z}]}_{f}/{[d_{x}]}_{f})]_{q}$, where $n$ is the frame size, $k$
is the number of logical qudits per frame, $\mu$ is the memory,
${[d_{z}]}_{f}$ (${[d_{x}]}_{f}$) is the free distance corresponding
to phase-shift (qudit-flip) errors and $\gamma$ is the degree of the
code. The combined amplitude damping and dephasing channel (see
\cite{Sarvepalli:2009} the references therein) is a quantum channel
whose probability of occurrence of phase-shift errors is greater
than the probability of occurrence of qudit-flip errors.

In this paper we propose constructions of the first families of
asymmetric quantum convolutional codes. The constructions presented
here are performed algebraically (as mentioned above).

The first families of AQCC's presented in this paper, i.e.,
Construction I, are obtained from the construction method proposed
in Subsection~\ref{subIIIA}. This construction method is general,
i.e., it holds for every choice of sets of $m < n$ linearly
independent vectors ${\bf v}_{i}\in { \mathbb F}_{q}^{n}$, $i=1, 2,
\ldots, m$ (see the proof of Theorem~\ref{main1}). The AQCC's
derived from Construction I have parameters

\begin{itemize}

\item $[(n, \operatorname{rk}H_0, {\mu}^{*}; {\gamma}_{1}+{\gamma}_{2},
{(d_{z})}_{f}/{(d_{x})}_{f})]_{q},$ \\ where ${\gamma}_{1}=
\mu(\operatorname{rk}H_{\mu}+\operatorname{rk}H_{\mu}^{'})+\displaystyle\sum_{i=1}^{\mu-1}(\mu
-i)[\operatorname{rk}H_{(\mu-i)}^{'}-\operatorname{rk}H_{(\mu-i+1)}^{'}]$,
${\gamma}_{2}=
\mu(\operatorname{rk}H_{\mu}^{'})+\displaystyle\sum_{i=1}^{\mu-1}(\mu
-i)[\operatorname{rk}H_{(\mu-i)}^{'}-\operatorname{rk}H_{(\mu-i+1)}^{'}]$,
${(d_{x})}_{f}\geq{(d_1)}_{f} \geq d^{\perp}$ and ${(d_{z})}_{f}\geq
{(d_{2})}_f^{\perp} $, where ${(d_1)}_{f}, d^{\perp},
{(d_{2})}_f^{\perp}$ and the matrices $H_{0}, H_{0}^{'}, H_1,
H_{1}^{'}, \ldots,$ $\ldots, H_{\mu}, H_{\mu}^{'}$ are defined in
the proof of Theorem~\ref{main1}.

\end{itemize}

In Construction II (see Subsection~\ref{subIIIB}) we present
families of AQCC's derived from classical maximum-distance-separable
BCH codes:

\begin{itemize}

\item $[(n, 2i-4, {\mu}^{*}; 6, {[d_{z}]}_{f}/{[d_{x}]}_{f})]_{q}$,\\ where
$q=2^{t}$, $t\geq 4$, $n=q+1$, ${(d_{z})}_{f}\geq n-2i-1$ and
${(d_{x})}_{f}\geq 3$, for all $3\leq i \leq \frac{q}{2} - 1$;

\item $[(n, 2i-2t-2, {\mu}^{*}; 6,
{[d_{z}]}_{f}\geq n-2i-1/{[d_{x}]}_{f}\geq 2t+3)]_{q}$,\\ where
$q=2^{l}$, $l\geq 4$, $n=q+1$, $t$ integer with $1\leq t\leq i-2$,
$3\leq i \leq \frac{q}{2}$;

\item $[(n, 2i-2t, {\mu}^{*}; 4, {[d_{z}]}_{f}/{[d_{x}]}_{f})]_{q}$,\\ where
${(d_{z})}_{f}\geq n-2i-1$ and ${(d_{x})}_{f}\geq 2t+3$, $q=2^{l}$,
$l\geq 4$, $n=q+1$, $t$ integer with $1\leq t\leq i-1$, $2\leq i
\leq \frac{q}{2}$;

\item $[(n, 2i-2t-2, {\mu}^{*}; 6,
{[d_{z}]}_{f}/{[d_{x}]}_{f})]_{q}$,\\ where $q=p^{l}$, $p$ is an odd
prime, $l\geq 2$, $n=q+1$, ${(d_{z})}_{f}\geq n-2i$ and
${(d_{x})}_{f}\geq 2t+2$, for all $1\leq t\leq i-2$, where $3\leq i
\leq \frac{n}{2} - 1$;

\item $[(n, 2i-2t, {\mu}^{*}; 4,
{[d_{z}]}_{f}/{[d_{x}]}_{f})]_{q}$,\\
where $q=p^{l}$, $p$ is an odd prime, $l\geq 2$, $n=q+1$,
${(d_{z})}_{f}\geq n-2i$ and ${(d_{x})}_{f}\geq 2t+2$, for all
$1\leq t\leq i-1$, with $2\leq i \leq \frac{n}{2} - 1$;

\end{itemize}

In Construction III (see Subsection~\ref{subIIIC}) we construct
families of AQCC's derived from classical Reed-Solomon and
generalized Reed-Solomon codes. The AQCC's shown in Construction II
are distinct of the AQCC's shown in Construction III.

\begin{itemize}

\item $[(q-1, i-t-1, {\mu}^{*}; 3,
{[d_{z}]}_{f}/{[d_{x}]}_{f})]_{q}$,\\ where $q \geq 8$ is a prime
power ${(d_{z})}_{f}\geq q-i-1$ and ${(d_{x})}_{f}\geq t+2$, for all
$1\leq t\leq i-2$, where $3\leq i \leq q-3$;

\item $[(q-1, i-t, {\mu}^{*}; 2, {[d_{z}]}_{f}/{[d_{x}]}_{f})]_{q}$,\\ where
${(d_{z})}_{f}\geq q-i-1$, ${(d_{x})}_{f}\geq t+2$, for all $1\leq
t\leq i-1$, where $2\leq i \leq q-3$;

\item $[(n, n-t-k-2, {\mu}^{*}; 3, {[d_{z}]}_{f}/{[d_{x}]}_{f})]_{q}$,\\
where ${(d_{z})}_{f}\geq t+2$ and ${(d_{x})}_{f}\geq k+1$, where $q
\geq 5$ is a prime power, $k\geq 1$ and $n$ are integers such that
$5\leq n\leq q$ and $k\leq n-4$ and $t$ is an integer with $1\leq
t\leq n-k-2$;

\item $[(n, n-t-k-1, {\mu}^{*}; 2,
{[d_{z}]}_{f}/{[d_{x}]}_{f})]_{q}$,\\ where $q\geq 5$ is a prime
power, $k \geq 1$, $n \geq 5$ are integers such that $n\leq q$,
$k\leq n-4$, $1\leq t\leq n-k-1$, ${(d_{z})}_{f}\geq t+2$ and
${(d_{x})}_{f}\geq k+1$.
\end{itemize}

The ideas utilized in Constructions II and III are very similar to
that shown in Construction I, although in the latter constructions
it is possible to compute precisely the parameters are of the AQCC's
due to the structure of the classical BCH codes and (generalized)
Reed-Solomon codes involved in the construction process.

The paper is arranged as follows. In Section~\ref{II}, we recall the
concepts of convolutional and quantum convolutional codes. In
Section~\ref{III}, we present the contributions of this work,
\emph{i.e.}, the first families of asymmetric quantum convolutional
codes are constructed. In Section~\ref{IV}, we discuss the results
presented in this paper and, in Section~\ref{V}, the final remarks
are drawn.

\section{Background}\label{II}

\emph{Notation.} Throughout this paper, $p$ denotes a prime number,
$q$ is a prime power and ${\mathbb F}_{q}$ is a finite field with
$q$ elements. The code length is denoted by $n$ and we always assume
that $\gcd (q, n) = 1$. As usual, the multiplicative order of $q$
modulo $n$ is denoted by $l= {\operatorname{ord}}_{n}(q)$, and
$\alpha$ is considered a primitive $n$-th root of unity in the
extension field ${\mathbb F}_{{q}^{l}}$. The parameters of a linear
block code over ${\mathbb F}_{q}$, of length $n$, dimension $k$ and
minimum distance $d$, is denoted by $[n, k, d{]}_{q}$. Sometimes, we
abuse the notation by writing $C=[n, k, d{]}_{q}$. If $C$ is a
linear code then ${C}^{{\perp}}$ denotes its Euclidean dual.

\subsection{Review of Convolutional Codes}\label{IIA}

Convolutional codes are extensively investigated in the literature
\cite{Forney:1970,Lee:1976,Piret:1988,Rosenthal:1999,Johannesson:1999,Huffman:2003,Rosenthal:2001}.
Recall that a polynomial encoder matrix $G(D) \in {\mathbb
F}_{q}{[D]}^{k \times n}$ is called \emph{basic} if $G(D)$ has a
polynomial right inverse. A basic generator matrix is called
\emph{reduced} (or minimal, see \cite{Rosenthal:2001,Huffman:2003})
if the overall constraint length $\gamma
=\displaystyle\sum_{i=1}^{k} {\gamma}_i$, where ${\gamma}_i =
{\max}_{1\leq j \leq n} \{ \deg g_{ij} \}$, has the smallest value
among all basic generator matrices. In this case, we say that
$\gamma$ is the \emph{degree} of the resulting code.

A rate $k/n$ \emph{convolutional code} $C$ with parameters $(n, k,
\gamma ; \mu, d_{f} {)}_{q}$ is a submodule of ${\mathbb F}_q
{[D]}^{n}$ generated by a reduced basic matrix $G(D)=(g_{ij}) \in
{\mathbb F}_q {[D]}^{k \times n}$, \emph{i.e.}, $C = \{ {\bf
u}(D)G(D) | {\bf u}(D)\in {\mathbb F}_{q} {[D]}^{k} \}$, where $n$
is the length, $k$ is the dimension, $\gamma
=\displaystyle\sum_{i=1}^{k} {\gamma}_i$ is the degree, $\mu =
{\max}_{1\leq i\leq k}\{{\gamma}_i\}$ is the memory and
$d_{f}=$wt$(C)=\min \{wt({\bf v}(D)) \mid {\bf v}(D) \in C, {\bf
v}(D)\neq 0 \}$ is the free distance of the code. In the above
definition, the \emph{weight} of an element ${\bf v}(D)\in {\mathbb
F}_{q} {[D]}^{n}$ is defined as wt$({\bf
v}(D))=\displaystyle\sum_{i=1}^{n}$wt$(v_i(D))$, where wt$(v_i(D))$
is the number of nonzero coefficients of $v_{i}(D)$. In the field of
Laurent series ${\mathbb F}_{q}((D))$, whose elements are given by
${\bf u}(D) = {\sum}_{i} u_i D^{i}$, where $u_i \in {\mathbb F}_{q}$
and $u_i = 0$ for $i\leq r $, for some $r \in \mathbb{Z}$, we define
the weight of ${\bf u}(D)$ as wt$({\bf u}(D)) =
{\sum}_{\mathbb{Z}}$wt$(u_i)$. A generator matrix $G(D)$ is called
\emph{catastrophic} if there exists a ${\bf u}{(D)}^{k}\in {\mathbb
F}_{q}{((D))}^{k}$ of infinite Hamming weight such that ${\bf
u}{(D)}^{k}G(D)$ has finite Hamming weight. The AQCC's constructed
in this paper have noncatastrophic generator matrices since the
corresponding classical convolutional codes constructed here have
basic (and reduced) generator matrices. The Euclidean inner product
of two $n$-tuples ${\bf u}(D) = {\sum}_i {\bf u}_i D^i$ and ${\bf
v}(D) = {\sum}_j {\bf u}_j D^j$ in ${\mathbb F}_q {[D]}^{n}$ is
defined as $\langle {\bf u}(D)\mid {\bf v}(D)\rangle = {\sum}_i {\bf
u}_i \cdot {\bf v}_i$. If $C$ is a convolutional code then the code
$C^{\perp }=\{ {\bf u}(D) \in {\mathbb F}_q {[D]}^{n}\mid \langle
{\bf u}(D)\mid {\bf v}(D)\rangle = 0$ for all ${\bf v}(D)\in C\}$
denotes its Euclidean dual.

Let $C \subseteq { \mathbb F}_{q}^{n}$ an ${[n, k, d]}_{q}$ block
code with parity check matrix $H$. We split $H$ into $\mu+1$
disjoint submatrices $H_i$ such that $H = \left[
\begin{array}{c}
H_0\\
H_1\\
\vdots\\
H_{\mu}\\
\end{array}
\right],$ where each $H_i$ has $n$ columns, obtaining the polynomial
matrix $G(D) =  {\tilde H}_0 + {\tilde H}_1 D + {\tilde H}_2 D^2 +
\ldots + {\tilde H}_{\mu} D^{\mu}$. The matrices ${\tilde H}_i$,
$1\leq i\leq \mu$, are derived from the respective matrices $H_i$ by
adding zero-rows at the bottom such that ${\tilde H}_i$ has $\kappa$
rows in total, where $\kappa$ is the maximal number of rows among
the matrices $H_i$. The matrix $G(D)$ generates a convolutional code
$V$.

\begin{theorem}\cite[Theorem 3]{Aly:2007}\label{A}
Let $C \subseteq { \mathbb F}_{q}^{n}$ be a linear code with
parameters ${[n, k, d]}_{q}$. Assume that $H \in { \mathbb
F}_{q}^{(n-k)\times n}$ is a parity check matrix for $C$ partitioned
into submatrices $H_0, H_1, \ldots, H_{\mu}$ as above such that
$\kappa = \operatorname{rk}H_0$ and $\operatorname{rk}H_i \leq
\kappa$ for $1 \leq i\leq \mu$.\\
(a) The matrix $G(D)$ is a reduced basic generator matrix;\\
(b) If $d_f$ and $d_{f}^{\perp}$ denote the free distances of $V$
and $V^{\perp}$, respectively, $d_i$ denote the minimum distance of
the code $C_i = \{ {\bf v}\in { \mathbb F}_{q}^{n} \mid {\bf v}
{\tilde H}_i^t =0 \}$ and $d^{\perp}$ is the minimum distance of
$C^{\perp}$, then one has $\min \{ d_0 + d_{\mu} , d \} \leq
d_f^{\perp} \leq  d$ and $d_f \geq d^{\perp}$.
\end{theorem}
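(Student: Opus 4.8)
The "final statement" in the excerpt is Theorem \ref{A}, which is quoted from [Aly:2007, Theorem 3]. Since it's a cited result, I'll sketch how one would prove such a statement.

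The plan is to prove (a) and (b) separately, in each case reducing the assertion to a statement about the row blocks $H_i$ and exploiting that $H$, being a parity check matrix, has $n-k$ linearly independent rows; in particular the row spaces of $H_0,H_1,\dots,H_\mu$ are in direct sum (even over the algebraic closure $\overline{{\mathbb F}}_q$) and $\operatorname{rk}H_i$ is just the number of rows of $H_i$. For (a) I would invoke the standard criteria: $G(D)\in{\mathbb F}_q[D]^{\kappa\times n}$ is basic iff $G(D_0)$ has rank $\kappa$ for every $D_0\in\overline{{\mathbb F}}_q$, and a basic $G(D)$ is reduced iff its highest row-degree coefficient matrix has full row rank $\kappa$. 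Suppose ${\bf c}\,G(D_0)={\bf 0}$ for some nonzero ${\bf c}$ and some $D_0$; collecting powers of $D_0$ this becomes ${\bf c}\tilde H_0=-\sum_{i\geq 1}D_0^{i}\,{\bf c}\tilde H_i$, with the left side in the row space of $H_0$ and the right side in the span of the row spaces of $H_1,\dots,H_\mu$, so the direct-sum property forces ${\bf c}H_0={\bf 0}$; since $\operatorname{rk}H_0=\kappa$ this gives ${\bf c}={\bf 0}$, hence $G(D)$ is basic. For reducedness, the leading coefficient of the $j$-th row of $G(D)$ is the $j$-th row of the block $H_i$ of largest index $i$ that still has a $j$-th row (or the $j$-th row of $H_0$ if that row of $G(D)$ has degree $0$); as $j$ runs over $1,\dots,\kappa$ these are $\kappa$ pairwise distinct rows of $H$, hence linearly independent, so the highest row-degree coefficient matrix of $G(D)$ has rank $\kappa$ and $G(D)$ is reduced.

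For (b), I would dispose of the two easy estimates first. Writing a codeword of $V$ as ${\bf v}(D)=\sum_t {\bf v}_t D^t$ with ${\bf v}_t=\sum_i {\bf u}_{t-i}\tilde H_i$, each ${\bf v}_t$ is a linear combination of rows of the $\tilde H_i$, hence of rows of $H$ (the padding contributes only zeros), so ${\bf v}_t\in C^{\perp}$; when ${\bf v}(D)\neq{\bf 0}$ some ${\bf v}_t\neq{\bf 0}$, and then $\operatorname{wt}({\bf v}(D))\geq\operatorname{wt}({\bf v}_t)\geq d^{\perp}$, which is $d_f\geq d^{\perp}$. For $d_f^{\perp}\leq d$, take ${\bf c}\in C$ with $\operatorname{wt}({\bf c})=d$; since $\tilde H_i{\bf c}^{t}={\bf 0}$ for every $i$, the constant word ${\bf c}$ is convolutionally orthogonal (against every shift) to every ${\bf u}(D)G(D)$, so ${\bf c}\in V^{\perp}$ and $d_f^{\perp}\leq d$.

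The heart of the matter, and the step I expect to be the main obstacle, is $d_f^{\perp}\geq\min\{d_0+d_\mu,d\}$. I would first extract from the definition of the dual the sliding parity-check description: ${\bf w}(D)=\sum_m{\bf w}_m D^m\in V^{\perp}$ iff $\sum_{i=0}^{\mu}\tilde H_i\,{\bf w}_{m+i}^{t}={\bf 0}$ for all $m\in{\mathbb Z}$. Given a nonzero such ${\bf w}(D)$, normalize so that ${\bf w}_0\neq{\bf 0}$ and ${\bf w}_m={\bf 0}$ for $m<0$, and put $b=\max\{m:{\bf w}_m\neq{\bf 0}\}$. The window $m=-\mu$ kills every term but $\tilde H_\mu{\bf w}_0^{t}$, so ${\bf w}_0\in C_\mu$ and $\operatorname{wt}({\bf w}_0)\geq d_\mu$; the window $m=b$ kills every term but $\tilde H_0{\bf w}_b^{t}$, so ${\bf w}_b\in C_0$ and $\operatorname{wt}({\bf w}_b)\geq d_0$. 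If $b>0$ these two nonzero pieces sit in distinct $D$-coordinates, so $\operatorname{wt}({\bf w}(D))\geq d_0+d_\mu$; if $b=0$ then ${\bf w}(D)={\bf w}_0$ and the windows $m=-\mu,\dots,0$ force $\tilde H_j{\bf w}_0^{t}={\bf 0}$ for all $j$, i.e. ${\bf w}_0\in\ker H=C$, so $\operatorname{wt}({\bf w}_0)\geq d$. In either case $\operatorname{wt}({\bf w}(D))\geq\min\{d_0+d_\mu,d\}$, completing the proof. The delicate points are pinning down the exact form of the convolutional dual and its sliding parity-check characterization, and the bookkeeping at the two boundary windows; the rest is routine linear algebra.
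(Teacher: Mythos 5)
The paper never proves Theorem~\ref{A} at all --- it is quoted verbatim from \cite[Theorem 3]{Aly:2007} --- so there is no internal proof to compare against; the following assesses your sketch on its own terms. Part (a) is correct: since $H$ has full row rank $n-k$, each $H_i$ has exactly $\operatorname{rk}H_i$ rows and the row spaces of distinct blocks meet trivially even over $\overline{{\mathbb F}}_q$, so your evaluation argument gives $\operatorname{rk}G(\lambda)=\kappa$ for every $\lambda\in\overline{{\mathbb F}}_q$ (basicness), and your identification of the leading row coefficients with $\kappa$ distinct rows of $H$ gives row-reducedness. The easy bounds $d_f\ge d^{\perp}$ and $d_f^{\perp}\le d$ are fine, and the boundary-window argument for $\min\{d_0+d_{\mu},d\}\le d_f^{\perp}$ is the standard one.

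The one substantive issue is exactly the point you flagged as delicate: the sliding parity-check description of $V^{\perp}$ with windows for \emph{all} $m\in{\mathbb Z}$ does not follow from the Euclidean dual as this paper defines it. With $V^{\perp}=\{{\bf w}(D)\in{\mathbb F}_q[D]^n \mid \langle{\bf v}(D)\mid{\bf w}(D)\rangle=0$ for all ${\bf v}(D)\in V\}$ and $V$ the ${\mathbb F}_q[D]$-module generated by $G(D)$, orthogonality is tested only against nonnegative shifts $D^{j}{\bf e}_{l}G(D)$, $j\ge 0$, which yields the constraints $\sum_i\tilde H_i{\bf w}_{m+i}^{t}=0$ only for $m\ge 0$. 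Under that literal definition any nonzero constant word ${\bf w}_0\in C_0$ lies in $V^{\perp}$ (its only active constraint is $\tilde H_0{\bf w}_0^{t}=0$), so $d_f^{\perp}\le d_0$ and the claimed lower bound would in general fail. The statement is true for the usual convolutional dual --- orthogonality against all integer shifts, equivalently the module dual determined by $G(D)H(D)^{t}=0$, which is also what the symplectic condition $X(D)Z(1/D)^{t}-Z(D)X(1/D)^{t}=0$ presupposes --- and that is precisely what your all-${\mathbb Z}$ characterization encodes; the windows $m=-\mu,\dots,-1$ you invoke to force ${\bf w}_0\in C_{\mu}$ (and, when $b=0$, ${\bf w}_0\in C$) are exactly the ones unavailable under the weaker definition. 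So either state explicitly that $V^{\perp}$ is taken in the all-shifts (sequence-space) sense and derive the negative windows from that, or the step ``window $m=-\mu$ gives ${\bf w}_0\in C_{\mu}$'' is unjustified; with that clarification your proof is complete.
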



\subsection{Review of quantum convolutional codes}\label{IIB}

In this subsection, we recall the concept of quantum convolutional
code (QCC). For more details, the reader can consult
\cite{Aly:2007,Klapp:2007,Forney:2007}.

A quantum convolutional code is defined by means of its stabilizer
which is a subgroup of the infinite version of the Pauli group,
consisting of tensor products of generalized Pauli matrices acting
on a semi-infinite stream of qudits. The stabilizer can be defined
by a stabilizer matrix of the form $S(D) = ( X(D)\mid Z(D)) \in
{\mathbb F}_{q}{[D]}^{(n-k)\times 2n}$ satisfying $X(D){Z(1/D)}^{t}
- Z(D){X(1/D)}^{t}=0$ (symplectic orthogonality). Let ${\mathcal Q}$
be a QCC defined by a full-rank stabilizer matrix $S(D)$ given
above. The constraint length is defined as ${\gamma}_{i} =
{\max}_{1\leq j\leq n} \{ \max \{\deg X_{ij}(D),$ $\deg Z_{ij}(D)\}
\},$ and the overall constraint length as $\gamma
=\displaystyle\sum_{i=1}^{n-k} {\gamma}_{i}$. If $\gamma$ has the
smallest value among all basic generator matrices then $\gamma$ is
the \emph{degree} of the code. The memory $\mu$ of ${\mathcal Q}$ is
defined as $\mu = {\max}_{1\leq i\leq n-k, 1\leq j\leq n} \{ \max \{
\deg {X}_{ij}(D),$ $\deg {Z}_{ij} (D)\} \}$.

Here we define the free distance of a quantum convolutional code
\cite{Klapp:2007}. Let ${\mathbb H} = {\mathbb C}^{q^n} = {\mathbb
C}^{q} \otimes \ldots \otimes {\mathbb C}^{q}$ be the Hilbert space
and $\mid$$x \rangle$ be the vectors of an orthonormal basis of
${\mathbb C}^{q}$, where $ x \in {\mathbb F}_{q}$. Let $a, b \in
{\mathbb F}_{q}$ and take the unitary operators $X(a)$ and $Z(b)$ in
${\mathbb C}^{q}$ defined by $X(a)$$\mid$$x \rangle =$$\mid$$x +
a\rangle$ and $Z(b)$$\mid$$x \rangle = w^{tr(bx)}$$\mid$$x\rangle$,
respectively, where $w=\exp (2\pi i/ p)$ is a primitive $p$-th root
of unity, $p$ is the characteristic of ${\mathbb F}_{q}$ and
$\operatorname{tr}$ is the trace map from ${\mathbb F}_{q}$ to
${\mathbb F}_{p}$. Let ${\mathbb E} = \{X(a), Z(b) | a, b \in
{\mathbb F}_{q} \}$ be the \emph{error basis}. The set $P_{\infty}$
(according to \cite{Klapp:2007}) is the set of all infinite tensor
products of matrices $N\in \langle M\mid M \in {\mathbb E} \rangle$,
in which all but finitely many tensor components are equal to $I$,
where $I$ is the $q\times q$ identity matrix. Then one defines the
\emph{weight} wt of $A\in P_{\infty}$ as its (finite) number of
nonidentity tensor components. In this context, one says that a
quantum convolutional code has free distance $d_{f}$ if and only if
it can detect all errors of weight less than $d_{f}$, but cannot
detect some error of weight $d_{f}$. Then ${\mathcal Q}$ is a rate
$k/n$ code with parameters $[(n, k, \mu; \gamma, d_{f} ){]}_{q}$,
where $n$ is the frame size, $k$ is the number of logical qudits per
frame, $\mu$ is the memory, $\gamma$ is the degree and $d_{f}$ is
the free distance of the code.

On the other hand, a quantum convolutional code can also be
described in terms of a semi-infinite stabilizer matrix $S$ with
entries in ${\mathbb F}_{q} \times {\mathbb F}_{q}$ in the following
way. If $S(D)=\displaystyle\sum_{i=0}^{\mu}G_{i}D^{i}$, where each
matrix $G_{i}$ for all $i=0, \ldots, \mu$, is a matrix of size $(n -
k) \times n$, then the semi-infinite matrix is defined as
\begin{eqnarray*}
S = \left[
\begin{array}{cccccccc}
G_0 & G_1 & \ldots & G_{\mu} & 0 & \ldots & \ldots & \ldots\\
0 & G_0 & G_1 & \ldots & G_{\mu} & 0 & \ldots & \ldots\\
0 & 0 & G_0 & G_1 & \ldots & G_{\mu} & 0 & \ldots\\
\vdots & \vdots & \vdots & \vdots & \vdots & \vdots & \vdots & \vdots\\
\end{array}
\right].
\end{eqnarray*}
Let us recall the well known CSS-like construction:

\begin{theorem}\cite{Steane:1996,Calderbank:1998,Ketkar:2006}
(CSS-like Construction) Let $C_1$ and $C_2$ be two classical
convolutional codes with parameters $(n, k_1)_{q}$ and $(n,
n-k_2)_{q}$, respectively, such that $C_2^{\perp}\subset C_1$. The
stabilizer matrix is given by
\begin{eqnarray*}
\left(
\begin{array}{ccc}
H_2 (D) & | & 0\\
0 & | & H_1 (D)\\
\end{array}
\right) \in {\mathbb F}_{q}[D]^{(n-k_1 +k_2)\times 2n},
\end{eqnarray*}
where $H_1 (D)$ and $H_2 (D)$ denote parity check matrices of $C_1$
and $C_2$, respectively. Then there exists an $[(n, K=k_1 - k_2,
{(d_{z})}_{f}/{(d_{x})}_{f})]_{q}$ convolutional stabilizer code,
where ${(d_{x})}_{f} = \min \{ \operatorname{wt}(C_1 \backslash
C_2^{\perp}), \operatorname{wt}( C_2 \backslash C_1^{\perp}) \}$ and
${(d_{z})}_{f} = \max \{ \operatorname{wt}(C_1 \backslash
C_2^{\perp}),$ $\operatorname{wt}( C_2 \backslash C_1^{\perp}) \}$.
\end{theorem}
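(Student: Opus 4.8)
The plan is to transport the classical CSS construction — treated in the cited references for block codes — to the convolutional (Laurent‑series) setting, so that the only genuine work is the module‑theoretic bookkeeping over ${\mathbb F}_{q}[D]$ and ${\mathbb F}_{q}((D))$.

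First I would check that the displayed matrix $S(D) = ( X(D) \mid Z(D) )$, with $X(D) = \left[\begin{array}{c} H_{2}(D)\\ 0 \end{array}\right]$ and $Z(D) = \left[\begin{array}{c} 0\\ H_{1}(D) \end{array}\right]$, satisfies the symplectic orthogonality $X(D)Z(1/D)^{t} - Z(D)X(1/D)^{t} = 0$. Multiplying the blocks out, the only nonzero blocks that can appear are $H_{2}(D)H_{1}(1/D)^{t}$ and its negative transpose, so the condition reduces to the single identity $H_{2}(D)H_{1}(1/D)^{t} = 0$. Since the rows of the parity‑check matrix $H_{2}(D)$ generate $C_{2}^{\perp}$ and, by hypothesis, $C_{2}^{\perp} \subset C_{1} = \ker H_{1}(D)$, every row of $H_{2}(D)$ is annihilated by $H_{1}(D)^{t}$ — equivalently by $H_{1}(1/D)^{t}$ after the standard reindexing — which is exactly that identity. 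Moreover, as $H_{1}(D)$ has full rank $n - k_{1}$ and $H_{2}(D)$ has full rank $k_{2}$, and their nonzero blocks occupy complementary rows and the two complementary halves of the $2n$ columns, $S(D)$ has full row rank $n - k_{1} + k_{2}$; hence the stabilizer code encodes $K = n - (n - k_{1} + k_{2}) = k_{1} - k_{2}$ logical qudits per frame. Taking $H_{1}(D)$, $H_{2}(D)$ reduced basic (for instance as produced by Theorem~\ref{A}) makes $S(D)$ reduced basic, hence the encoder non‑catastrophic, and the memory and degree are then read off from the block decompositions.

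For the free distance I would invoke the detectability criterion recalled in Subsection~\ref{IIB}: an operator $X(a(D))Z(b(D))$, with $a(D), b(D) \in {\mathbb F}_{q}((D))^{n}$, is undetectable exactly when it centralizes the stabilizer but does not belong to it. Writing commutation with the generators of $S(D)$ as classical membership conditions, this says $a(D) \in C_{1}$ and $b(D) \in C_{2}$ while not simultaneously $a(D) \in C_{2}^{\perp}$ and $b(D) \in C_{1}^{\perp}$; hence $a(D) \in C_{1}\setminus C_{2}^{\perp}$ or $b(D) \in C_{2}\setminus C_{1}^{\perp}$. Since $\operatorname{wt}(X(a)Z(b)) \geq \max\{\operatorname{wt}(a(D)), \operatorname{wt}(b(D))\}$, every undetectable operator has weight at least $\min\{\operatorname{wt}(C_{1}\setminus C_{2}^{\perp}), \operatorname{wt}(C_{2}\setminus C_{1}^{\perp})\}$, and this bound is attained by a purely $X$‑type or purely $Z$‑type operator. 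Keeping the two error types apart, the minimum weights of undetectable qudit‑flip and phase‑shift operators are $\operatorname{wt}(C_{1}\setminus C_{2}^{\perp})$ and $\operatorname{wt}(C_{2}\setminus C_{1}^{\perp})$ in one order or the other; assigning the $X$/$Z$ roles so that the more likely phase‑shift errors carry the larger free distance then yields $(d_{z})_{f} = \max\{\cdots\}$ and $(d_{x})_{f} = \min\{\cdots\}$.

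The main obstacle is not the combinatorics but the convolutional algebra underlying these steps: one must justify that a reduced basic parity‑check matrix of a convolutional code has rows generating the appropriate dual module, that the $D \mapsto 1/D$ symplectic form used for the stabilizer matrix agrees with the Euclidean dual used in the commutation analysis, and that a minimum‑weight undetectable operator may be taken with finite (polynomial) support. These facts are standard in the framework of \cite{Aly:2007,Klapp:2007,Forney:2007}, so I would cite them rather than reprove them; granting them, what remains is the classical CSS argument applied essentially verbatim.
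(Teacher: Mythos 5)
The paper does not prove this theorem at all—it is quoted as a known result from \cite{Steane:1996,Calderbank:1998,Ketkar:2006} (adapted to the convolutional setting as in \cite{Aly:2007,Klapp:2007,Forney:2007})—and your sketch reproduces exactly the standard CSS argument of those sources: symplectic orthogonality reducing to $H_{2}(D)H_{1}(1/D)^{t}=0$ via $C_{2}^{\perp}\subset C_{1}$, the rank count $K=n-(n-k_{1}+k_{2})=k_{1}-k_{2}$, and the centralizer-modulo-stabilizer identification of the undetectable errors giving the two free distances. Your proposal is therefore correct and takes essentially the same route as the (cited) proof, with the convolutional-algebra facts appropriately delegated to the references.
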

\begin{remark}
To avoid overly burdensome notation, we assume throughout this paper
that if ${(d_{x})}_{f}> {(d_{z})}_{f}$ then the values are changed.
\end{remark}


\section{Asymmetric quantum convolutional codes}\label{III}

In this section we present the contributions of this paper. As it
was said previously, we construct the first families of AQCC's by
means of algebraic methods. More specifically, we construct reduced
basic generator matrices for two classical convolutional codes $V_1$
and $V_2$, where $V_{2} \subset V_1$, in order to apply the CSS-type
construction. This section is divided in three subsections, which
contain three distinct code constructions.

\subsection{Construction I}\label{subIIIA}

In this section we present the first construction method of this
paper. Theorem~\ref{main1} establishes the existence of AQQC�s:

\begin{theorem}\label{main1}(General Construction)
Let $q$ be a prime power and $n$ be a positive integer. Then there
exist asymmetric quantum convolutional codes with parameters
$$[(n, \operatorname{rk}H_0, {\mu}^{*}; {\gamma}_{1}+{\gamma}_{2},
{(d_{z})}_{f}/{(d_{x})}_{f})]_{q},$$ where ${\gamma}_{1}=
\mu(\operatorname{rk}H_{\mu}+\operatorname{rk}H_{\mu}^{'})+\displaystyle\sum_{i=1}^{\mu-1}(\mu
-i)[\operatorname{rk}H_{(\mu-i)}^{'}-\operatorname{rk}H_{(\mu-i+1)}^{'}]$,
${\gamma}_{2}=
\mu(\operatorname{rk}H_{\mu}^{'})+\displaystyle\sum_{i=1}^{\mu-1}(\mu
-i)[\operatorname{rk}H_{(\mu-i)}^{'}-\operatorname{rk}H_{(\mu-i+1)}^{'}]$,
${(d_{x})}_{f}\geq{(d_1)}_{f} \geq d^{\perp}$ and ${(d_{z})}_{f}\geq
{(d_{2})}_f^{\perp} $, where ${(d_1)}_{f}, d^{\perp},
{(d_{2})}_f^{\perp}$ and the matrices $H_{0}, H_{0}^{'}, H_1,
H_{1}^{'}, \ldots, H_{\mu}, H_{\mu}^{'}$, are constructed below.
\end{theorem}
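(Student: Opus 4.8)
The plan is to build the two classical convolutional codes $V_1$ and $V_2$ demanded by the CSS-type construction, starting from suitable classical block codes and then applying Theorem~\ref{A} to lift them to the convolutional setting. Concretely, I would first fix any set of $m<n$ linearly independent vectors ${\bf v}_1,\ldots,{\bf v}_m\in{\mathbb F}_q^n$ and let $H$ be the $(n-k)\times n$ matrix whose rows span the code these vectors cut out; this is the source of freedom that makes the construction ``general''. I then split $H$ into horizontal blocks $H_0,H_1,\ldots,H_\mu$, with $\kappa=\operatorname{rk}H_0$ the maximal row rank, so that the hypotheses of Theorem~\ref{A} are met and $G(D)={\tilde H}_0+{\tilde H}_1D+\cdots+{\tilde H}_\mu D^\mu$ is a reduced basic (hence noncatastrophic) generator matrix of a convolutional code $V$. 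To obtain a nested pair, I would repeat the same partition construction on a subcode, producing a second family of blocks $H_0^{'},H_1^{'},\ldots,H_\mu^{'}$ and a convolutional code $V'$ with $V'\subseteq V$ (equivalently, after dualizing, the CSS containment $V_2^{\perp}\subset V_1$ with $V_1,V_2$ the codes generated by the two parity-check matrices). The key algebraic point is to arrange the row spaces of the primed blocks inside those of the unprimed ones so that the nesting is preserved frame by frame.

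Next I would read off the parameters. The frame size is $n$; the dimension of the resulting AQCC is $k_1-k_2$, which after bookkeeping equals $\operatorname{rk}H_0$ because the $H_0$ block controls the number of rows $\kappa$ of the reduced basic matrix, and the CSS dimension $K$ is the difference of the two code dimensions. The memory ${\mu}^{*}$ is the largest exponent of $D$ appearing in either stabilizer block, i.e. the maximum of the memories of the two convolutional pieces. The degree is the sum ${\gamma}_1+{\gamma}_2$ of the overall constraint lengths of the two generator matrices; here the explicit formulas ${\gamma}_1=\mu(\operatorname{rk}H_\mu+\operatorname{rk}H_\mu^{'})+\sum_{i=1}^{\mu-1}(\mu-i)[\operatorname{rk}H_{(\mu-i)}^{'}-\operatorname{rk}H_{(\mu-i+1)}^{'}]$ and the analogous ${\gamma}_2$ come from summing the row degrees of $G(D)$, which is a standard computation once one counts how many rows of each block survive the zero-padding into ${\tilde H}_i$. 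For the free distances, I would invoke part (b) of Theorem~\ref{A}: it gives $d_f\geq d^{\perp}$ for the code $V$ and a similar bound $\min\{d_0+d_\mu,d\}\leq d_f^{\perp}\leq d$, so that after translating through the CSS construction one obtains ${(d_x)}_f\geq{(d_1)}_f\geq d^{\perp}$ and ${(d_z)}_f\geq{(d_2)}_f^{\perp}$, where ${(d_1)}_f$ is the free distance of $V_1$ and ${(d_2)}_f^{\perp}$ that of $V_2^{\perp}$. Finally, the symplectic orthogonality $X(D)Z(1/D)^t-Z(D)X(1/D)^t=0$ of the CSS stabilizer matrix is automatic from $V_2^{\perp}\subset V_1$, so ${\mathcal Q}$ is a genuine quantum convolutional code.

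The main obstacle I anticipate is the bookkeeping that produces the exact degree formulas: one must track precisely how the rank of each block $H_i^{'}$ (and its padding up to $\kappa$ rows) contributes to each row degree ${\gamma}_i$ of the reduced basic matrix, and then verify that the telescoping sums $\sum_{i=1}^{\mu-1}(\mu-i)[\operatorname{rk}H_{(\mu-i)}^{'}-\operatorname{rk}H_{(\mu-i+1)}^{'}]$ collect these contributions correctly. A secondary subtlety is ensuring the nesting $V'\subseteq V$ (or $V_2^\perp\subset V_1$) holds at the level of the padded polynomial matrices and not merely the underlying block codes, since the zero-row padding must be compatible between the primed and unprimed families; choosing the primed blocks as subsets of rows of the unprimed ones handles this cleanly. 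Everything else — minimality via Theorem~\ref{A}(a), the distance inequalities via Theorem~\ref{A}(b), and the CSS assembly — is then a direct application of the results already recalled.
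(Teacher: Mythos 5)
Your proposal follows essentially the same route as the paper: split a matrix of $m<n$ linearly independent rows into blocks, use the Aly-type splitting (Theorem~\ref{A}) to obtain two nested reduced basic convolutional codes, and feed the pair $V_2\subset V_1$ into the CSS-like construction, reading off the distance bounds from Theorem~\ref{A}(b) and the degrees from the row degrees. The only real difference is cosmetic bookkeeping: the paper makes the nesting literal by stacking the primed blocks beneath the unprimed ones inside $G_1(D)$, so that $G_2(D)$ is a submatrix of $G_1(D)$ and the CSS dimension is $(\operatorname{rk}H_0+\operatorname{rk}H_0^{'})-\operatorname{rk}H_0^{'}=\operatorname{rk}H_0$, whereas under your labeling (with $G(D)$ built from the unprimed blocks only) the dimension would read $\operatorname{rk}H_0-\operatorname{rk}H_0^{'}$, so the stated parameter formulas are recovered exactly only with the paper's stacked arrangement.
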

\begin{proof}
Consider a set of $m < n$ linearly independent (LI) vectors ${\bf
v}_{i}\in { \mathbb F}_{q}^{n}$, $i=1, 2, \ldots, m$. Let
\begin{eqnarray*}
{\mathcal H}=\left[\begin{array}{c}
H_0\\
H_{0}^{'}\\
H_1\\
H_{1}^{'}\\
\vdots\\
H_{\mu}\\
H_{\mu}^{'}\\
\end{array}
\right]
\end{eqnarray*}
be the matrix whose rows are the vectors ${\bf v}_{i}$, $i=1, 2,
\ldots, m$. The matrices $H_{0}, H_{0}^{'}, H_1, H_{1}^{'}, \ldots,
H_{\mu}, H_{\mu}^{'}$, are mutually disjoint. The matrices $H_i$,
$i=0, 1, \ldots, \mu$, are chosen in such a way that
$\operatorname{rk}H_{i}= \operatorname{rk}H_{j}$, for all $i,j=0,1,
\ldots, \mu$ (the choice of the vectors in each $H_i$ is arbitrary).
In order to compute the degree of the convolutional code constructed
in the sequence, we assume that $H_{0}^{'}$ has full rank and $
\operatorname{rk}H_{0}^{'}\geq
 \operatorname{rk}H_{1}^{'}\geq \ldots \geq \operatorname{rk} H_{\mu}^{'}$.
The matrices ${{\tilde H}}_{i}^{'}$ with $1\leq i\leq \mu$, are
obtained from the respective matrices $H_i^{'}$ by adding zero-rows
at the bottom such that ${\tilde H}_i$ has
$\operatorname{rk}H_{0}^{'}$ rows in total.

Let ${\mathcal H}$ be a parity check matrix of a linear block code
$C=[n, k, d{]}_{q}$, where $k=n - m$. Consider the linear block code
$C^{*}=[n, k^{*}, d^{*}]_{q}$ with parity check matrix
\begin{eqnarray*}
{H}^{*}=\left[\begin{array}{c}
H_{0}^{'}\\
H_{1}^{'}\\
\vdots\\
H_{\mu}^{'}\\
\end{array}
\right].
\end{eqnarray*}
Next, we construct a matrix $G_1(D)$ as follows:
\begin{eqnarray*}
G_1(D) =
 \left[\begin{array}{c}
H_0\\
--\\
{H}_{0}^{'}\\
\end{array}
\right]+ \left[\begin{array}{c}
H_1\\
--\\
{{\tilde H}}_{1}^{'}\\
\end{array}
\right]D + \left[\begin{array}{c}
H_2\\
--\\
{{\tilde H}}_{2}^{'}\\
\end{array}
\right]D^{2} + \ldots + \left[\begin{array}{c}
{H}_{\mu}\\
--\\
{\tilde H}_{\mu}^{'}\\
\end{array}
\right]D^{\mu}.
\end{eqnarray*}

Further, let us consider the submatrices $G_{0}(D)$ and $G_{2}(D)$
of $G_{1}(D)$, given, respectively, by
$$G_{0}(D) =  H_0 + {H}_1 D + {H}_2 D^{2} +
\ldots + {H}_{\mu} D^{\mu}$$ and
$$G_{2}(D) =  {H}_{0}^{'} + {\tilde H}_{1}^{'} D + {\tilde
H}_{2}^{'} D^2 + \ldots + {\tilde H}_{\mu}^{'} D^{\mu}.$$ We know
that $G_{1}(D) \in {\mathbb F}_{q}{[D]}^{\kappa \times n}$,
\emph{i.e.}, $G_{1}(D)$ has full rank
$\kappa=\operatorname{rk}H_{0}+\operatorname{rk}H_{0}^{'}$;
$G_{2}(D)$ has full rank $k_{0}^{'}=\operatorname{rk}H_{0}^{'}$.
From construction, it follows that $G_{1}(D)$ and $G_{2}(D)$ are
reduced basic generator matrices of convolutional codes $V_1$ and
$V_2$, respectively. Both convolutional codes have memory $\mu$.
Applying a similar idea as in the proof of \cite[Theorem
3]{Aly:2007}, the free distance ${(d_1)}_{f}$ of the convolutional
code $V_1$ and the free distance ${(d_1)}_{f}^{\perp}$ of its
Euclidean dual $V_{1}^{\perp}$ satisfy $\min \{ D_0 + D_{\mu} , d \}
\leq {(d_1)}_{f}^{\perp} \leq d$ and ${(d_1)}_{f} \geq d^{\perp}$,
where $D_0$ is the minimum distance of the code with parity check
matrix $\left[\begin{array}{c}
H_0\\
--\\
{H}_{0}^{'}\\
\end{array}
\right]$ and $D_{\mu}$ is the minimum distance of the code with
parity check matrix $\left[\begin{array}{c}
{H}_{\mu}\\
--\\
{{\tilde H}}_{\mu}^{'}\\
\end{array}
\right]$. Similarly, the free distance ${(d_{2})}_f$ of $V_2$ and
the free distance ${(d_{2})}_f^{\perp}$ of $V_{2}^{\perp}$ satisfy
$\min \{ d_0^{'} + d_{\mu}^{'} , d^{*} \} \leq {(d_{2})}_f^{\perp}
\leq d^{*}$ and ${(d_{2})}_f \geq {(d^{\perp})}^{*}$, where
$d_0^{'}$ is the minimum distance of the code $C_{0}^{'}$ with
parity check matrix ${H}_{0}^{'}$ and $d_{\mu}^{'}$ is the minimum
distance of the code with parity check matrix ${{\tilde
H}}_{\mu}^{'}$. The degree ${\gamma}_{2}$ of $V_2$ equals
${\gamma}_{2}=
\mu(\operatorname{rk}H_{\mu}^{'})+\displaystyle\sum_{i=1}^{\mu-1}(\mu
-i)[\operatorname{rk}H_{(\mu-i)}^{'}-\operatorname{rk}H_{(\mu-i+1)}^{'}]$;
the code $V_{2}^{\perp}$ also has degree ${\gamma}_{2}$. On the
other hand, the degree ${\gamma}_{1}$ of $V_1$ is equal to
${\gamma}_{1}=
\mu(\operatorname{rk}H_{\mu}+\operatorname{rk}H_{\mu}^{'})+\displaystyle\sum_{i=1}^{\mu-1}(\mu
-i)[\operatorname{rk}H_{(\mu-i)}^{'}-\operatorname{rk}H_{(\mu-i+1)}^{'}]$;
$V_{1}^{\perp}$ also has degree ${\gamma}_{1}$.

We know that $V_2 \subset V_1$. The corresponding CSS-type code
derived from $V_1$ and $V_{2}$ has frame size $n$,
$k=\operatorname{rk}H_0$ logical qudits per frame, degree $\gamma =
{\gamma}_{1}+{\gamma}_{2}$, ${(d_{x})}_{f}\geq{(d_1)}_{f} \geq
d^{\perp}$ and ${(d_{z})}_{f}\geq {(d_{2})}_f^{\perp} $, where $\min
\{ d_0^{'} + d_{\mu}^{'} , d^{*} \} \leq {(d_{2})}_f^{\perp} \leq
d^{*}$. Thus one can get an $[(n, \operatorname{rk}H_0, {\mu}^{*};
{\gamma}_{1}+{\gamma}_{2}, {(d_{z})}_{f}/{(d_{x})}_{f})]_{q}$ AQCC.
If $H_1(D)$ is a generator matrix of the code $V_{1}^{\perp}$ then a
stabilizer matrix of our AQCC is given by
\begin{eqnarray*}
\left(
\begin{array}{ccc}
G_2(D) & | & 0\\
0 & | & H_1(D)\\
\end{array}
\right).
\end{eqnarray*}
Other variant of this construction can be obtained by considering a
CSS-type code derived from the pair of classical convolutional codes
$V_{1}^{\perp} \subset V_{2}^{\perp}$. The proof is complete.
\end{proof}

\subsection{Construction II}\label{subIIIB}

Let $q$ be a prime power and $n$ a positive integer such that $\gcd
(q, n) = 1$. Let $\alpha$ be a primitive $n$-th root of unity in
some extension field. Recall that a cyclic code $C$ of length $n$
over ${\mathbb F}_{q}$ is a Bose-Chaudhuri-Hocquenghem (BCH) code
with designed distance $\delta$ if, for some integer $b\geq 0$, we
have $g(x)= \operatorname{l.c.m.} \{{M}^{(b)}(x),
{M}^{(b+1)}(x),\ldots, {M}^{(b+\delta-2)}(x)\}$, \emph{i.e.}, $g(x)$
is the monic polynomial of smallest degree over $F_{q}$ having
${{\alpha}^{b}}, {{\alpha}^{b+1}},$ $\ldots,
{{\alpha}^{b+\delta-2}}$ as zeros. Therefore, $c\in C$ if and only
if $c({\alpha}^{b})=c({{\alpha}^{b+1}})=\ldots =
c({{\alpha}^{b+\delta-2}})=0$. Thus the code has a string of $\delta
- 1$ consecutive powers of $\alpha$ as zeros. It is well known that
the minimum distance of a BCH code is greater than or equal to its
designed distance $\delta$. A parity check matrix for $C$ is given
by
\begin{eqnarray*}
H_{\delta , b} =  \left[
\begin{array}{ccccc}
1 & {{\alpha}^{b}} & {{\alpha}^{2b}} & \cdots & {{\alpha}^{(n-1)b}} \\
1 & {{\alpha}^{(b+1)}} & {{\alpha}^{2(b+1)}} & \cdots & {{\alpha}^{(n-1)(b+1)}}\\
\vdots & \vdots & \vdots & \vdots & \vdots\\
1 & {{\alpha}^{(b+\delta-2)}} & \cdots & \cdots & {{\alpha}^{(n-1)(b+\delta-2)}}\\
\end{array}
\right],
\end{eqnarray*}
where each entry is replaced by the corresponding column of $l$
elements from ${\mathbb F}_{q}$, where $l=
{\operatorname{ord}}_{n}(q)$, and then removing any linearly
dependent rows. The rows of the resulting matrix over ${\mathbb
F}_{q}$ are the parity checks satisfied by $C$.

Let us recall a useful results shown in \cite{LaGuardia:2014}:

\begin{theorem}\cite[Theorem 4.2]{LaGuardia:2014}\label{laguardiamds}
Assume that $q=2^{t}$, where $t\geq 3$ is an integer, $n=q+1$ and
consider that $a=\frac{q}{2}$. Then there exist classical MDS
convolutional codes with parameters $(n, n-2i, 2; 1, 2i+3)_{q}$,
where $1\leq i \leq a - 1$.
\end{theorem}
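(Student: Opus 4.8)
The plan is to build the parity check matrix of the classical MDS convolutional code directly from the parity check matrix of a classical block BCH code, exactly in the spirit of Theorem~\ref{A}, and then verify each of the five parameters $(n, n-2i, 2; 1, 2i+3)_q$ in turn. First I would fix $q=2^t$, $n=q+1$, and recall that over $\mathbb F_q$ a narrow-sense (or suitably chosen) BCH code of length $n=q+1$ enjoys the ``ambient'' structure in which consecutive roots $\alpha^{b},\ldots,\alpha^{b+\delta-2}$ yield an MDS block code; the relevant fact is that for $n=q+1$ the multiplicative order $l=\operatorname{ord}_n(q)=2$, so each column of the Vandermonde-type matrix $H_{\delta,b}$ expands into $2$ rows over $\mathbb F_q$, and the resulting block code is MDS with distance exactly $\delta$ for the range of $\delta$ in play. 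I would pick $\delta$ and $b$ so that the block code $C$ has parameters $[n, n-2i-1, 2i+2]_q$ or similar (the precise book-keeping depends on which consecutive set of zeros is taken), designed so that after the convolutional ``wrap'' the free distance comes out to $2i+3$.

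The main construction step: partition the parity check matrix $H$ of $C$ into $H_0, H_1$ (so memory $\mu=2$ after the expansion into two $\mathbb F_q$-rows, or rather $\mu$ as dictated by the split) with $\kappa=\operatorname{rk}H_0$ and $\operatorname{rk}H_1\le\kappa$, and then invoke Theorem~\ref{A}(a) to conclude that the resulting $G(D)=\tilde H_0+\tilde H_1 D+\cdots$ is a reduced basic generator matrix. That immediately gives the length $n$, the dimension $n-2i$ (from counting rows: the split is arranged so that $G(D)$ has exactly $2i$ rows, i.e. $\operatorname{rk}H_0+\deg$-contributions total $2i$), the memory $2$, and degree $1$. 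The degree being $1$ is the delicate bit: one must choose the split so that $\operatorname{rk}H_1$ and the higher blocks are minimal, so that $\gamma=\sum\gamma_i=1$; this is really a claim that only one row of the generator matrix has positive constraint length, which forces $H_1$ (equivalently $\tilde H_1$) to have rank $1$. So the split must be: put all but one of the parity-check rows into $H_0$, and a single row into $H_1$.

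For the free distance I would apply Theorem~\ref{A}(b): $\min\{d_0+d_\mu, d\}\le d_f^{\perp}\le d$ and $d_f\ge d^{\perp}$, but here the claim is that $d_f=2i+3$ exactly and the code is MDS, i.e. meets the generalized Singleton bound $d_f = n - k + 1$ with the appropriate convolutional correction; with $k=n-2i$, $\gamma=1$, $\mu=2$, the generalized Singleton bound reads $d_f\le (n-k)(\lfloor\gamma/k\rfloor+1)+\gamma+1 = 2i+0+1+1\cdot(\ldots)$ — I would compute this constant and check it equals $2i+3$. The lower bound $d_f\ge 2i+3$ is then the real content: I would establish it by choosing the single ``memory'' row of $H_1$ to be a specific power $\alpha^{\,c}$-row so that the effective number of consecutive zeros available to any nonzero codeword path is at least $2i+2$, hence weight $\ge 2i+3$ by the BCH bound applied along the semi-infinite stabilizer/generator stream. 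Concretely, the weight of a nonzero $v(D)\in V$ is bounded below by the BCH-type argument that the ``time-0'' and ``wrapped'' constraints together force $2i+2$ consecutive spectral zeros.

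The step I expect to be the main obstacle is pinning down the free-distance lower bound $d_f\ge 2i+3$ rigorously: Theorem~\ref{A}(b) only gives $d_f^\perp$ bounds and $d_f\ge d^\perp$, which by itself yields $d_f\ge d^\perp$ where $d^\perp$ is the distance of the dual of the block BCH code — I would need $d^\perp\ge 2i+3$, which may require either a direct Vandermonde/BCH-bound computation on $C^\perp$ (using that $C^\perp$ is again BCH-like with a known consecutive-root set, since for $n=q+1$ duals of BCH codes are BCH codes) or a sharper argument. Matching this lower bound with the generalized Singleton upper bound to conclude ``MDS'' is then a one-line consistency check. The remaining verifications (basic, reduced, memory $=2$, degree $=1$) are routine once the explicit split is written down.
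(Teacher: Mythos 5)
First, note that the paper does not prove this statement at all: it is quoted verbatim from \cite[Theorem 4.2]{LaGuardia:2014}, so there is no in-paper proof to match. Your overall strategy (split the parity check matrix of an MDS block BCH code of length $n=q+1$, invoke Theorem~\ref{A}, and match the generalized Singleton bound) is indeed the strategy of the cited construction, but your execution has two genuine gaps. The first is a misreading of the parameter convention: in this paper $(n,k,\gamma;\mu,d_f)_q$ lists degree before memory, so the target code has degree $2$ and memory $1$ (unit memory), not memory $2$ and degree $1$ as you assume. This matters: your proposed split with a single row in $H_1$ (rank $1$) gives degree $1$, and then the generalized Singleton bound yields at most $2i+2$, so the claimed free distance $2i+3$ could never be attained; the correct split puts one conjugate pair of rows (rank $2$, coming from one $\mathbb{F}_{q^2}$-row expanded over $\mathbb{F}_q$, exactly the pairing you yourself observed from $\operatorname{ord}_n(q)=2$) into $H_1$, which is what produces degree $2$, memory $1$, and a Singleton bound of $2i+3$.

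The second gap is the dimension/duality bookkeeping, which also sends you after the wrong distance bound. If $G(D)$ has $2i$ rows it generates a code $V$ of dimension $2i$, not $n-2i$; the code with the stated parameters is the dual $V^{\perp}$ (equivalently, the code with $G(D)$ as parity check matrix), built from the parity check matrix of the MDS block code $C=[n,n-2i-2,2i+3]_q$ obtained from the $2i+2$ consecutive roots $\alpha^{a-i},\ldots,\alpha^{a+i+1}$. Consequently the relevant part of Theorem~\ref{A}(b) is $\min\{d_0+d_\mu,d\}\leq d_f^{\perp}\leq d$: with the $2i$ consecutive roots in $H_0$ one gets $d_0\geq 2i+1$ and trivially $d_1\geq 2$, so $d_f^{\perp}=d=2i+3$, which meets the generalized Singleton bound for an $(n,n-2i)$ code of degree $2$, giving the MDS property. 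Your route via $d_f\geq d^{\perp}$, requiring $d^{\perp}\geq 2i+3$, cannot work: here $C^{\perp}=[n,2i+2,n-2i-1]_q$, so $d^{\perp}=n-2i-1$, which is smaller than $2i+3$ for most $i$ in the admissible range (e.g.\ $i=a-1$ gives $d^{\perp}=2$); that bound controls the other code of the pair (dimension $2i$, free distance $\geq n-2i-1$, as used in Theorem~\ref{main2}), not the MDS code asserted here.
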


Theorem~\ref{main2} establishes conditions in which it is possible
to construct AQCC's derived from BCH codes.

\begin{theorem}\label{main2}
Let $q=2^{t}$, where $t\geq 4$ and consider that $n=q+1$ and
$a=\frac{q}{2}$. Then there exists an AQCC with parameters $[(n,
2i-4, {\mu}^{*}; 6, {[d_{z}]}_{f}/{[d_{x}]}_{f})]_{q}$, where
${(d_{z})}_{f}\geq n-2i-1$ and ${(d_{x})}_{f}\geq 3$, for all $3\leq
i \leq a - 1$.
\end{theorem}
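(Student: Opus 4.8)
The plan is to realize Theorem~\ref{main2} as an instance of the general machinery of Theorem~\ref{main1}, using a nested pair of BCH codes of length $n=q+1$ over $\mathbb{F}_q$, $q=2^t$. First I would fix $a=q/2$ and, for a given $i$ with $3\le i\le a-1$, take the ambient code $C$ to be the BCH code whose parity-check matrix $\mathcal{H}$ consists of the rows of $H_{\delta,b}$ for a suitable run of consecutive powers of $\alpha$; the narrow-sense choices that give MDS codes of length $q+1$ here are exactly those exploited in Theorem~\ref{laguardiamds}. Concretely, I would arrange so that the subcode $C^{*}$ with parity-check matrix $H^{*}$ (the primed blocks) is the BCH code with designed distance corresponding to free distance $\ge 2t+3$ on its dual side, while the complementary unprimed blocks $H_0,\dots,H_\mu$ supply the rows making up the larger code whose dual free distance controls $(d_z)_f$. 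The arithmetic of $q+1$-cyclotomic cosets over $\mathbb{F}_{2^t}$ (each nontrivial coset has size $2$, consecutive powers pair up as $\{\alpha^j,\alpha^{-j}\}$) is what forces the rank counts: one gets $\operatorname{rk}H_i$ constant in $i$ and the primed ranks decreasing, as required by the hypotheses of Theorem~\ref{main1}.

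Next I would extract the numerical parameters. The dimension of the AQCC is $\operatorname{rk}H_0$; the claim $k=2i-4$ pins down the common block-rank $\operatorname{rk}H_0$ in terms of $i$, which I would verify by counting how many length-$l=2$ rows each scalar row of $H_{\delta,b}$ contributes and dividing the total $2(i-2)$ rows of the unprimed part into $\mu+1$ equal blocks — this is where the factor $2$ and the shift by $4$ come from. For the degree, I would invoke the closed forms $\gamma_1,\gamma_2$ from Theorem~\ref{main1}; with memory $\mu=1$ (so the $H_i^{(\prime)}$ split into just two blocks, $i=0,1$) and $\operatorname{rk}H_1=\operatorname{rk}H_1'=1$ after reduction, the sums telescope and $\gamma_1+\gamma_2$ collapses to $6$, matching the stated degree. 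I would then read off $\mu^{*}$ (the memory of the final CSS code, namely $\max$ of the two memories, here small and explicitly computable). For the distances I would apply part (b) of Theorem~\ref{A} / the corresponding bounds inside the proof of Theorem~\ref{main1}: $(d_x)_f\ge (d_1)_f\ge d^{\perp}$, and $d^{\perp}$ is the minimum distance of the dual of the small BCH code, which by the BCH/MDS bound for length $q+1$ is $\ge 3$ (in fact exactly $3$ for the relevant designed distance); and $(d_z)_f\ge (d_2)_f^{\perp}\ge \min\{d_0'+d_\mu',\,d^{*}\}$, where $d^{*}$, the minimum distance of $C^{*}$, is $\ge n-2i-1$ by the BCH bound applied to its $2i+1$-or-so consecutive zeros. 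The $\ge n-2i-1$ is thus the MDS-flavored estimate on the larger designed distance.

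The main obstacle I anticipate is the bookkeeping that simultaneously (i) keeps $C^{*}\subset C$ so that $V_2\subset V_1$ and the CSS construction applies, (ii) realizes all the rank equalities/inequalities demanded in Theorem~\ref{main1} ($\operatorname{rk}H_i$ constant, $\operatorname{rk}H_0'\ge\cdots\ge\operatorname{rk}H_\mu'$, $H_0'$ full rank), and (iii) makes the zero-sets of the two BCH codes be \emph{consecutive} runs of powers of $\alpha$ of the right lengths after the $l=2$ expansion and removal of dependent rows. In other words, the hard part is choosing the partition of the $m=2(i-2)+(\text{primed rows})$ vectors into the blocks $H_0,H_0',H_1,H_1',\dots$ so that every constraint of Theorem~\ref{main1} holds on the nose while the designed distances come out as $n-2i-1$ and $3$ (equivalently $2t+3$ on the flip side, via the dual). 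Once that combinatorial layout over $\mathbb{F}_{2^t}$ is fixed, the rest is substitution into the already-proved formulas of Theorems~\ref{A}, \ref{laguardiamds}, and \ref{main1}; I would close by writing down the explicit stabilizer matrix $\left(\begin{array}{ccc} G_2(D) & | & 0\\ 0 & | & H_1(D)\end{array}\right)$ exactly as in the proof of Theorem~\ref{main1}.
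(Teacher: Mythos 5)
Your plan to obtain Theorem~\ref{main2} as a literal instance of Theorem~\ref{main1} does not go through, and the place where it fails is exactly the ``bookkeeping'' you defer to the end. In Theorem~\ref{main1} the CSS code has dimension $\operatorname{rk}H_0$ and the hypothesis $\operatorname{rk}H_i=\operatorname{rk}H_j$ for all $i,j=0,\ldots,\mu$ forces every one of the $\operatorname{rk}H_0$ unprimed rows of $G_1(D)$ to carry a degree-$\mu$ term, so ${\gamma}_1\geq \mu\operatorname{rk}H_0$. To get dimension $2i-4$ you would need $\operatorname{rk}H_0=2i-4$, whence ${\gamma}_1+{\gamma}_2\geq 2i-4$, a quantity that grows with $i$ and already exceeds $6$ for $i\geq 6$ (which is allowed, since $i$ ranges up to $a-1=q/2-1\geq 7$ when $t\geq 4$). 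Your own numbers are also internally inconsistent with the formulas you invoke: with $\mu=1$ and $\operatorname{rk}H_1=\operatorname{rk}H_1^{'}=1$, the closed forms give ${\gamma}_1+{\gamma}_2=\operatorname{rk}H_1+2\operatorname{rk}H_1^{'}=3$, not $6$, and the rank-equality hypothesis would then force $\operatorname{rk}H_0=1$, i.e.\ dimension $1$, not $2i-4$. This is precisely why the paper does not reuse Theorem~\ref{main1} here (see the remark following Theorem~\ref{main2}): it builds $G_1(D)$ directly from the parity-check matrix of the MDS BCH code $C={[n,n-2i-2,2i+3]}_q$ of \cite[Theorem 4.2]{LaGuardia:2014} so that only two scalar rows (four ${\mathbb F}_q$-rows) carry the coefficient of $D$, giving a unit-memory code $V_1$ of dimension $2(i-1)$ and degree $4$ independently of $i$, and it takes $V_2$ to be generated by a single polynomial row of $G_1(D)$ (dimension $2$, degree $2$), so that $V_2\subset V_1$ is immediate, $K=2(i-1)-2=2i-4$ and $\gamma=4+2=6$. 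The missing idea in your proposal is this specific shape of $G_1(D)$ and $G_2(D)$, which cannot be produced inside the rank constraints of Theorem~\ref{main1}.

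Two further points on the distances. The bound $n-2i-1$ is not obtained from the BCH bound applied to ``$2i+1$-or-so consecutive zeros'' as you assert (that bound controls the code itself and gives $\geq 2i+3$ here); it is the minimum distance of the dual $C^{\perp}={[n,2i+2,n-2i-1]}_q$, which requires the MDS property of the length-$(q+1)$ BCH code, and it reaches the quantum parameters through Theorem~\ref{A}(b) in the form ${[d_1]}_f\geq d^{\perp}$ for $V_1$. Likewise $(d_x)_f\geq 3$ comes from the free distance of $V_2^{\perp}$, governed by the two consecutive zeros in the single scalar row defining $V_2$; your repeated ``$2t+3$'' imports the parameter $t$ of Theorem~\ref{main3} into the wrong statement, since in Theorem~\ref{main2} the letter $t$ is the exponent in $q=2^t$ and the claimed bound is simply $3$. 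Once $G_1(D)$ and $G_2(D)$ are chosen as above, the remainder of your outline (the CSS-type stabilizer built from a parity-check matrix of $V_1^{\perp}$ together with $G_2(D)$) does coincide with the paper's argument.
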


\begin{proof}
Consider the parity check ${\mathbb F}_{q}$-matrix of the BCH code
$C$ given by
\begin{eqnarray*}
{\mathcal H} = \left[
\begin{array}{ccccc}
1 & {{\alpha}^{a}} & \cdots & \cdots & {{\alpha}^{(n-1)a}}\\
1 & {{\alpha}^{(a-1)}} & \cdots & \cdots & {{\alpha}^{(n-1)(a-1)}}\\
\vdots & \vdots & \vdots & \vdots & \vdots\\
1 & {{\alpha}^{(a-i+1)}} & {{\alpha}^{2(a-i+1)}} & \cdots & {{\alpha}^{(n-1)(a-i+1)}}\\
1 & {{\alpha}^{(a-i)}} & {{\alpha}^{2(a-i)}} & \cdots & {{ \alpha}^{(n-1)(a-i)}} \\
\end{array}
\right],
\end{eqnarray*}
whose entries are expanded with respect to some $ {\mathbb
F}_{q}$-basis ${\mathcal B}$ of ${\mathbb F}_{q^2}$, after removing
the linearly dependent rows. This BCH code was constructed in the
proof of \cite[Theorem 4.2]{LaGuardia:2014} (more precisely, it is
the code $C_2$ constructed there); $C$ is a MDS code with parameters
${[n, n-2i-2, 2i+3]}_{q}$. Its (Euclidean) dual code $C^{\perp}$ is
also a MDS code with parameters ${[n, 2i+2, n-2i-1]}_{q}$.

Next, we construct a classical convolutional code $V_{1}$ generated
by the reduced basic matrices
\begin{eqnarray*}
G_{1}(D) = \left[
\begin{array}{ccccc}
1 & {{\alpha}^{(a-i+2)}} & {{\alpha}^{2(a-i+2)}} & \cdots & {{
\alpha}^{(n-1)(a-i+2)}}\\
- & - & - & - & -\\
1 & {{\alpha}^{a}} & \cdots & \cdots & {{\alpha}^{(n-1)a}}\\
1 & {{\alpha}^{(a-1)}} & \cdots & \cdots & {{\alpha}^{(n-1)(a-1)}}\\
\vdots & \vdots & \vdots & \vdots & \vdots\\
1 & {{\alpha}^{(a-i+3)}} & {{\alpha}^{2(a-i+3)}} & \cdots & {{\alpha}^{(n-1)(a-i+3)}}\\
\end{array}
\right]+\\ \left[
\begin{array}{ccccc}
1 & {{\alpha}^{(a-i+1)}} & {{\alpha}^{2(a-i+1)}} & \cdots & {{\alpha}^{(n-1)(a-i+1)}}\\
- & - & - & - & -\\
1 & {{\alpha}^{(a-i)}} & {{\alpha}^{2(a-i)}} & \cdots & {{\alpha}^{(n-1)(a-i)}}\\
0 & 0 & 0 & 0 & 0\\
\vdots & \vdots & \vdots & \vdots & \vdots\\
0 & 0 & 0 & 0 & 0\\
\end{array}
\right]D
\end{eqnarray*}
and
\begin{eqnarray*}
G_{2}(D) = \left[
\begin{array}{ccccc}
1 & {{\alpha}^{(a-i+2)}} & {{\alpha}^{2(a-i+2)}} & \cdots & {{
\alpha}^{(n-1)(a-i+2)}}\\
\end{array}
\right]+\\ \left[
\begin{array}{ccccc}
1 & {{\alpha}^{(a-i+1)}} & {{\alpha}^{2(a-i+1)}} & \cdots & {{\alpha}^{(n-1)(a-i+1)}}\\
\end{array}
\right]D
\end{eqnarray*}
The code $V_1$, generated by $G_1(D)$, is a unit memory code of
dimension $k_1 = 2(i-1)$ and degree ${\gamma}_{1}=4$; $V_1$ is an
$(n, 2[i-1], 4; 1, {[d_{1}]}_{f}\geq n-2i-1)_{q}$ code. Its
Euclidean dual code $V_1^{\perp}$ has parameters $(n, n-2[i-1], 4;
{\mu}_{1}^{\perp}, {[d_{1}]}_{f}^{\perp}\geq 2i+2)_{q}$. The code
$V_{2}$, generated by $G_{2}(D)$, is an $(n, 2, 2; 1,
{[d_{2}]}_{f})_{q}$ code, so $V_{2}^{\perp}$ has parameters $(n,
n-2, 2; {\mu}_{2}^{\perp}, {[d_{2}]}_{f}^{\perp}\geq 3)_{q}$. From
construction, it follows that $V_{2}\subset V_1$, so
$V_{1}^{\perp}\subset V_{2}^{\perp}$. Consider the stabilizer matrix
given by
\begin{eqnarray*}
\left(
\begin{array}{ccc}
H_1(D) & | & 0\\
0 & | & G_{2}(D)\\
\end{array}
\right),
\end{eqnarray*}
where $H_1(D)$ is a parity check matrix of the code $V_{1}^{\perp}$.
The corresponding CSS-type code has $K=2i-4$, $\gamma = 6$,
${(d_{z})}_{f}\geq n-2i-1$ and ${(d_{x})}_{f}\geq 3$. Thus there
exists an $[(n, 2i-4, {\mu}^{*}; 6,
{[d_{z}]}_{f}/{[d_{x}]}_{f})]_{q}$ AQCC.
\end{proof}

\begin{remark}
It is interesting to note that the idea of construction of the
matrix $G_{2}(D)$ shown in the proof of Theorem~\ref{main2} is
distinct from that given in Theorem~\ref{main1}.
\end{remark}


\begin{theorem}\label{main3}
Let $q=2^{l}$, where $l\geq 4$ and consider that $n=q+1$ and
$a=\frac{q}{2}$. Then there exist AQCC's with parameters
\begin{itemize}
\item [ a)] $[(n, 2i-2t-2, {\mu}^{*}; 6,
{[d_{z}]}_{f}/{[d_{x}]}_{f})]_{q}$, where ${(d_{z})}_{f}\geq
n-2i-1$, ${(d_{x})}_{f}\geq 2t+3$, $i$ and $t$ are positive integers
such that $1\leq t\leq i-2$ and $3\leq i \leq a - 1$;

\item [ b)] $[(n, 2i-2t, {\mu}^{*}; 4, {[d_{z}]}_{f}/{[d_{x}]}_{f})]_{q}$, where
${(d_{z})}_{f}\geq n-2i-1$, ${(d_{x})}_{f}\geq 2t+3$, $i$ and $t$
are positive integers such that $1\leq t\leq i-1$ and $2\leq i \leq
a - 1$.
\end{itemize}
\end{theorem}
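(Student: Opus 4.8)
The plan is to mimic closely the proof of Theorem~\ref{main2}, since parts a) and b) are variants obtained by changing which classical MDS convolutional code plays the role of $V_1$ and by adjusting the gap between the two nested codes. The starting point in both cases is the same family of BCH-type parity check matrices $H_{\delta,b}$ built from consecutive powers of $\alpha$, expanded over an ${\mathbb F}_q$-basis ${\mathcal B}$ of ${\mathbb F}_{q^2}$, together with Theorem~\ref{laguardiamds}, which guarantees that the relevant cyclic/BCH codes are MDS with the stated parameters. So first I would fix $q=2^l$, $n=q+1$, $a=q/2$, and the integers $i,t$ in the indicated ranges, and single out two nested BCH codes $C_2 \subset C_1$ whose consecutive defining zero-sets differ by exactly $2t$ (for part a)) respectively $2t-1$ or so (for part b)) powers of $\alpha$, arranged so that the dual chain $C_1^\perp \subset C_2^\perp$ is what feeds the CSS-type construction.

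Next I would write down explicit reduced basic polynomial generator matrices $G_1(D)$ and $G_2(D)$ for classical convolutional codes $V_1$ and $V_2$, exactly as in the proof of Theorem~\ref{main2}: split the relevant parity-check matrix into $H_0 + H_1 D$ blocks (unit memory), pad the second block with zero-rows, and invoke Theorem~\ref{A}(a) to conclude that $G_1(D)$ is reduced basic. The memory is $\mu=1$ in each summand, and the degree is read off as in Construction~I — $4$ for the larger code in part a) and $2$ for the smaller, giving $\gamma_1+\gamma_2 = 6$ for part a); for part b) the smaller code is taken with degree $2$ and the larger still contributes $2$, or one of them is taken with memory/degree chosen so the total degree is $4$. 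The free-distance bounds ${(d_1)}_f \ge d^\perp$ and the dual bound $\min\{D_0+D_\mu,d\} \le {(d_1)}_f^\perp \le d$ from Theorem~\ref{A}(b), combined with the MDS parameters of the underlying BCH codes, yield ${(d_x)}_f \ge 2t+3$ and ${(d_z)}_f \ge n-2i-1$ after tracking which code is $C_1$ and which is $C_2$. The dimension count is purely combinatorial: $K = \dim V_1 - \dim V_2 = 2(i-1) - 2(i-1-t) = 2t$-type bookkeeping must be reconciled with the claimed $2i-2t-2$ and $2i-2t$, which pins down precisely which rows go into each matrix.

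Then I would apply the CSS-like construction (the theorem of Steane--Calderbank et al.\ quoted above) to the nested pair $V_1^\perp \subset V_2^\perp$, writing the stabilizer matrix in the block form $\bigl(\begin{smallmatrix} H_1(D) & 0 \\ 0 & G_2(D)\end{smallmatrix}\bigr)$ with $H_1(D)$ a parity check matrix of $V_1^\perp$, exactly as in Theorem~\ref{main2}. Symplectic orthogonality is automatic from $V_1^\perp \subset V_2^\perp$, non-catastrophicity follows because the generator matrices are basic and reduced, and the free distances $({(d_x)}_f, {(d_z)}_f)$ are the two weights $\operatorname{wt}(C_1\setminus C_2^\perp)$ and $\operatorname{wt}(C_2\setminus C_1^\perp)$ bounded below as above. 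Part b) is then obtained by repeating the argument with the convolutional code of degree $2$ in place of the degree-$4$ one (equivalently, shrinking the first block of $G_1(D)$), so that $\gamma_1+\gamma_2 = 4$ and the dimension shifts by $2$ relative to part a).

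The main obstacle I anticipate is bookkeeping rather than any deep idea: one must choose the splitting of the BCH parity-check matrices into the $D^0$ and $D^1$ blocks so that (i) the rank conditions $\operatorname{rk}H_0 \ge \operatorname{rk}H_i$ of Theorem~\ref{A} hold, ensuring $G_1(D)$ is genuinely reduced basic, (ii) the degree comes out to exactly $4$ or $2$ for the right summand, and (iii) the difference of dimensions matches the advertised $2i-2t-2$ or $2i-2t$ with the correct sign of the $t$-shift. Getting the ranges $1\le t\le i-2$ (resp.\ $1\le t\le i-1$) and $3\le i\le a-1$ (resp.\ $2\le i\le a-1$) to be exactly the values for which all the MDS hypotheses of Theorem~\ref{laguardiamds} and the nesting $C_2\subset C_1$ simultaneously hold is the delicate point; everything else is a direct transcription of the proof of Theorem~\ref{main2}.
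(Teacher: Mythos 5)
Your proposal follows essentially the same route as the paper's proof: it takes the same BCH parity-check matrix ${\mathcal H}$ from Theorem~\ref{main2}, splits it into unit-memory reduced basic generator matrices $G_1(D)$ (dimension $2(i-1)$, degree $4$) and $G_2(D)$ (degree $2$) with the rows of $G_2(D)$ appearing among those of $G_1(D)$ so that $V_2\subset V_1$, and then applies the CSS-type construction to $V_1^{\perp}\subset V_2^{\perp}$ with distance bounds read off from Theorem~\ref{A} and the MDS parameters, exactly as in the paper. The only bookkeeping you left open resolves just as you anticipate: $V_2$ must be taken with $\dim V_2 = 2t$ (the $t$ polynomial rows with constant parts $\alpha^{a},\ldots,\alpha^{a-(t-1)}$ and single $D$-row $\alpha^{a-t}$), giving $K=2(i-1)-2t=2i-2t-2$ and $\gamma=4+2=6$ for a), while for b) one enlarges $V_1$ to dimension $2i$ with only one $D$-row (degree $2$), giving $K=2i-2t$ and $\gamma=4$.
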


\begin{proof}
We only show Item a), since Item b) is similar. The notation and the
matrix ${\mathcal H}$ is the same as in the proof of
Theorem~\ref{main2}. We split ${\mathcal H}$ into disjoint
submatrices in order to construct a reduced basic generator matrix
$G_1(D)$ of the code $V_1$, given by
\begin{eqnarray*}
G_{1}(D) = \left[
\begin{array}{ccccc}
1 & {{\alpha}^{[a-(t+1)]}} & {{\alpha}^{2[a-(t+1)]}} & \cdots & {{
\alpha}^{(n-1)[a-(t+1)]}}\\
1 & {{\alpha}^{a}} & \cdots & \cdots & {{\alpha}^{(n-1)a}}\\
1 & {{\alpha}^{(a-1)}} & \cdots & \cdots & {{\alpha}^{(n-1)(a-1)}}\\
\vdots & \vdots & \vdots & \vdots & \vdots\\
1 & {{\alpha}^{[a-(t-1)]}} & {{\alpha}^{2[a-(t-1)]}} & \cdots & {{\alpha}^{(n-1)[a-(t-1)]}}\\
- & - & - & - & -\\
1 & {{\alpha}^{[a-(t+2)]}} & {{\alpha}^{2[a-(t+2)]}} & \cdots & {{
\alpha}^{(n-1)[a-(t+2)]}}\\
\vdots & \vdots & \vdots & \vdots & \vdots\\
1 & {{\alpha}^{[a-(i-2)]}} & {{\alpha}^{2[a-(i-2)]}} & \cdots &
{{\alpha}^{(n-1)[a-(i-2)]}}\\
1 & {{\alpha}^{[a-(i-1)]}} & {{\alpha}^{2[a-(i-1)]}} & \cdots &
{{\alpha}^{(n-1)[a-(i-1)]}}\\
\end{array}
\right]+\\ \left[
\begin{array}{ccccc}
1 & {{\alpha}^{(a-i)}} & {{\alpha}^{2(a-i)}} & \cdots & {{\alpha}^{(n-1)(a-i)}}\\
1 & {{\alpha}^{(a-t)}} & {{\alpha}^{2(a-t)}} & \cdots & {{
\alpha}^{(n-1)(a-t)}}\\
0 & 0 & 0 & 0 & 0\\
\vdots & \vdots & \vdots & \vdots & \vdots\\
0 & 0 & 0 & 0 & 0\\
- & - & - & - & -\\
0 & 0 & 0 & 0 & 0\\
\vdots & \vdots & \vdots & \vdots & \vdots\\
0 & 0 & 0 & 0 & 0\\
\end{array}
\right]D,
\end{eqnarray*}
Let $V_{2}$ be the convolutional code generated by the reduced basic
matrix $G_{2}(D)$
\begin{eqnarray*}
G_{2}(D) = \left[
\begin{array}{ccccc}
1 & {{\alpha}^{a}} & \cdots & \cdots & {{\alpha}^{(n-1)a}}\\
1 & {{\alpha}^{(a-1)}} & \cdots & \cdots & {{\alpha}^{(n-1)(a-1)}}\\
\vdots & \vdots & \vdots & \vdots & \vdots\\
1 & {{\alpha}^{[a-(t-1)]}} & {{\alpha}^{2[a-(t-1)]}} & \cdots & {{\alpha}^{(n-1)[a-(t-1)]}}\\
\end{array}
\right]+\\ \left[
\begin{array}{ccccc}
1 & {{\alpha}^{(a-t)}} & {{\alpha}^{2(a-t)}} & \cdots & {{
\alpha}^{(n-1)(a-t)}}\\
0 & 0 & 0 & 0 & 0\\
\vdots & \vdots & \vdots & \vdots & \vdots\\
0 & 0 & 0 & 0 & 0\\
\end{array}
\right]D
\end{eqnarray*}
It is easy to see that the code $V_1$ has parameters $(n, 2i-2, 4;
1, {[d_{1}]}_{f}\geq n-2i-1)_{q}$ and $V_1^{\perp}$ has parameters
$(n, n-2i+2 , 4; {\mu}_{1}^{\perp}, {[d_{1}]}_{f}^{\perp})_{q}$. The
code $V_{2}$, generated by $G_{2}(D)$, is an $(n, 2t, 2; 1,
{[d_{2}]}_{f})_{q}$ code, so $V_{2}^{\perp}$ has parameters $(n,
n-2t, 2; {\mu}_{2}^{\perp}, {[d_{2}]}_{f}^{\perp}\geq 2t+3 )_{q}$.
Since $V_{2}\subset V_1$, it follows that $V_{1}^{\perp}\subset
V_{2}^{\perp}$. Thus there exists an $[(n, 2i-2t-2, {\mu}^{*}; 6,
{[d_{z}]}_{f}/{[d_{x}]}_{f})]_{q}$ AQCC, where ${(d_{z})}_{f}\geq
n-2i-1$ and ${(d_{x})}_{f}\geq 2t+3$.
\end{proof}

\begin{example}
Applying Theorem~\ref{main3}, one can get AQCC's with parameters\\
$[(17, 6, {\mu}^{*}; 6, {[d_{z}]}_{f}\geq 6/{[d_{x}]}_{f}\geq
5)]_{16}$, $[(17, 8, {\mu}^{*}; 4, {[d_{z}]}_{f}\geq
6/{[d_{x}]}_{f}\geq 5)]_{16}$, $[(17, 8, {\mu}^{*}; 6,
{[d_{z}]}_{f}\geq 5/{[d_{x}]}_{f}\geq 4)]_{16}$, $[(17, 10,
{\mu}^{*}; 4, {[d_{z}]}_{f}\geq 5/{[d_{x}]}_{f}\geq 4)]_{16}$,
$[(33, 24, {\mu}^{*}; 6, {[d_{z}]}_{f}\geq 5/{[d_{x}]}_{f}\geq
4)]_{32}$, $[(33, 26, {\mu}^{*}; 4, {[d_{z}]}_{f}\geq
5/{[d_{x}]}_{f}\geq 4)]_{32}$, $[(33, 22, {\mu}^{*}; 6,
{[d_{z}]}_{f}\geq 6/{[d_{x}]}_{f}\geq 5)]_{32}$, $[(33, 24,
{\mu}^{*}; 4, {[d_{z}]}_{f}\geq 6/{[d_{x}]}_{f}\geq 5)]_{32}$,
$[(33, 20, {\mu}^{*}; 6, {[d_{z}]}_{f}\geq 8/{[d_{x}]}_{f}\geq
5)]_{32}$, $[(33, 22, {\mu}^{*}; 4, {[d_{z}]}_{f}\geq
8/{[d_{x}]}_{f}\geq 5)]_{32}$ and so on.
\end{example}

\begin{theorem}\label{main4}
Assume that $q=p^{l}$, where $p$ is an odd prime and $l\geq 2$.
Consider that $n=q+1$ and $a=\frac{n}{2}$. Then there exist AQCC's
with parameters
\begin{itemize}
\item [ a)] $[(n, 2i-2t-2, {\mu}^{*}; 6, {[d_{z}]}_{f}/{[d_{x}]}_{f})]_{q}$,
where ${(d_{z})}_{f}\geq n-2i$ and ${(d_{x})}_{f}\geq 2t+2$, for all
$1\leq t\leq i-2$, where $3\leq i \leq a - 1$;

\item [ b)] $[(n, 2i-2t, {\mu}^{*}; 4, {[d_{z}]}_{f}/{[d_{x}]}_{f})]_{q}$,
where ${(d_{z})}_{f}\geq n-2i$ and ${(d_{x})}_{f}\geq 2t+2$, for all
$1\leq t\leq i-1$, where $2\leq i \leq a - 1$.
\end{itemize}
\end{theorem}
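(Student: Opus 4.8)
The plan is to mimic the proof of Theorem~\ref{main3} almost verbatim, replacing the binary BCH machinery of \cite[Theorem 4.2]{LaGuardia:2014} by the odd-characteristic analogue. In \cite{LaGuardia:2014} there is a companion construction producing classical MDS convolutional codes of length $n=q+1$ over $\mathbb F_q$ with $q=p^{l}$, $p$ odd, coming from narrow-sense BCH codes whose defining set is a run of consecutive powers $\alpha^{a}, \alpha^{a-1}, \ldots, \alpha^{a-(i-1)}$ with $a=n/2$. So first I would fix the parity check matrix $\mathcal H$ over $\mathbb F_q$ exactly as displayed in the proof of Theorem~\ref{main2}, but with $a=\frac n2$ and the odd-prime-power hypothesis; by the BCH bound together with the Singleton bound, the underlying block code $C$ is MDS with parameters $[n,n-2i-1,2i+2]_q$ (one fewer consecutive root than in the even case, since the defining set $\{a,a-1,\dots,a-(i-1)\}$ has size $2i+1$ rather than $2i+2$), and its Euclidean dual $C^{\perp}$ is the MDS code $[n,2i+1,n-2i]_q$. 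This is the only place the parity of $q$ intervenes, and it is exactly what produces the distances $n-2i$ and $2t+2$ in the statement instead of $n-2i-1$ and $2t+3$.

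Next I would split $\mathcal H$ into disjoint submatrices precisely as in Theorem~\ref{main3}, Item~a): the constant part of $G_1(D)$ consists of the row indexed by $a-(t+1)$ placed on top, followed by the block of rows $a,a-1,\dots,a-(t-1)$, a separator, and then the rows $a-(t+2),\dots,a-(i-1)$; the $D$-coefficient of $G_1(D)$ carries the rows indexed by $a-i$ and $a-t$ in the first two positions and zero rows elsewhere. The code $V_2$ is generated by $G_2(D)$ whose constant part is the block $a,a-1,\dots,a-(t-1)$ and whose $D$-coefficient is the single row $a-t$ padded with zero rows. Then I would invoke Theorem~\ref{A}(a) (applied as in the proof of Theorem~\ref{main1}) to conclude that $G_1(D)$ and $G_2(D)$ are reduced basic generator matrices, hence the associated convolutional codes are noncatastrophic; counting rows gives $\dim V_1 = 2i-1$... wait, here one must be careful: since $C^{\perp}$ now has dimension $2i+1$, the row count of $\mathcal H$ is $2i+1$, so the splitting yields $V_1$ of dimension $2i-1$ with memory $1$ and degree $\gamma_1 = 4$ (two nonzero rows carry a $D$), while $V_2$ has dimension $2t$, memory $1$, degree $\gamma_2 = 2$. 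The free-distance estimates come from Theorem~\ref{A}(b) exactly as before: $[d_1]_f \ge (d^{\perp})$-type bound $\ge n-2i$ using that the rows of $G_1(D)$ evaluated at any scalar still contain the requisite run of consecutive powers of $\alpha$, and $[d_2]_f^{\perp}\ge 2t+2$ similarly. Since $V_2\subset V_1$ by inclusion of the defining root sets, $V_1^{\perp}\subset V_2^{\perp}$, and feeding the pair $(V_1^{\perp}, V_2^{\perp})$ into the CSS-like construction yields the AQCC with $K = \dim V_1 - \dim V_2 = (2i-1)-(2t) $...

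Here is the delicate bookkeeping point, and I expect it to be the main obstacle: matching the claimed dimension $2i-2t-2$ and degree $6$ with the row counts. The even-characteristic proof gets $K = (2i-2)-(2t) = 2i-2t-2$ because there $C^\perp$ has even dimension $2i+2$; in odd characteristic the run of roots is shorter by one, so to land on the same $K$ one must instead take the defining set of $C$ to have size $2i+2$ (i.e. roots $\alpha^{a},\dots,\alpha^{a-(i-... )}$ arranged symmetrically around $a=n/2$, which is possible precisely because $n=q+1$ is even and $\alpha^{n/2}=-1$ makes the set $\{a,a-1,\dots\}$ and its negatives interact favorably), giving $\dim V_1 = 2i$, $\dim V_2 = 2t+2$, degree $\gamma_1+\gamma_2 = 4+2 = 6$, and $K = 2i-(2t+2) = 2i-2t-2$; the distance of the resulting MDS dual is then $n-2i$ rather than $n-2i-1$, consistent with the statement. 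So the real work is choosing the symmetric window of $2i+2$ consecutive exponents centered so that the BCH bound delivers exactly $n-2i$ on $V_1^{\perp}$ and $2t+2$ on $V_2^{\perp}$, and checking that the two extra rows (relative to the even case) are exactly the two that receive a $D$, keeping $\gamma_1=4$ — this is a finite check but it is where sign conventions and the value of $a$ genuinely matter. Item~b) is obtained by the same splitting but with only one row carrying a $D$ in $G_1(D)$, dropping $\gamma_1$ from $4$ to $2$ and hence $\gamma=4$, exactly as Item~b) of Theorem~\ref{main3}; I would simply remark that it is analogous. $\hfill$ $\Box$
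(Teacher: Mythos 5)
Your overall strategy is the intended one: the paper's proof of Theorem~\ref{main4} is literally ``analogous to Theorem~\ref{main3}'', i.e.\ one takes the odd-characteristic companion of \cite[Theorem 4.2]{LaGuardia:2014} with $n=q+1$, $a=n/2$, and performs exactly the same splitting of ${\mathcal H}$. Your identification of the block code as $C=[n,n-2i-1,2i+2]_{q}$ with dual $[n,2i+1,n-2i]_{q}$ is also correct. The gap is in the dimension bookkeeping, and the ``obstacle'' you then try to repair is an artifact of that miscount. Since $q\equiv -1 \pmod n$ and $2a=n$, the cyclotomic coset of $a$ is the singleton $\{a\}$; indeed ${\alpha}^{a}=-1\in{\mathbb F}_{q}$, so the row $(1,{\alpha}^{a},{\alpha}^{2a},\ldots)$ already lies in ${\mathbb F}_{q}^{n}$ and contributes only \emph{one} row to the expanded matrix, while every exponent $a-j$, $j\geq 1$, contributes two. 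Hence ${\mathcal H}$ has $2i+1$ rows; the constant part of $G_{1}(D)$ (exponents $\{a,\ldots,a-(i-1)\}\setminus\{a-t\}$) has $1+2(i-2)=2i-3$ rows, not $2i-1$; the $D$-part (exponents $a-i$ and $a-t$) consists of four expanded rows of degree one, so ${\gamma}_{1}=4$; and $V_{2}$ has dimension $1+2(t-1)=2t-1$, not $2t$, with ${\gamma}_{2}=2$. This gives $K=(2i-3)-(2t-1)=2i-2t-2$, degree $6$, ${(d_{z})}_{f}\geq d^{\perp}=n-2i$ and ${(d_{x})}_{f}\geq 2t+2$ (the code with check rows $\{a-t,\ldots,a\}$ has complete defining set $\{a-t,\ldots,a+t\}$, designed distance $2t+2$), exactly as stated --- no modification of the defining set is needed, and the singleton coset is precisely what shifts all three distances by one relative to Theorem~\ref{main3} while leaving $K$ and $\gamma$ in the same form.

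Your proposed repair --- enlarging the defining set to $2i+2$ consecutive roots --- does not work and should be removed. First, a consecutive run containing $a$ that is closed under conjugation $x\mapsto qx\equiv -x \pmod n$ must be symmetric about $a$ and therefore has odd cardinality, so a complete defining set of size $2i+2$ of the required shape does not exist here. Second, even granting it, the dual of an $[n,n-2i-2,2i+3]_{q}$ MDS code has minimum distance $n-2i-1$, not $n-2i$ as you assert, so your fix would reproduce the even-characteristic bounds of Theorem~\ref{main3} rather than the distances claimed in Theorem~\ref{main4}; likewise your claim $\dim V_{2}=2t+2$ is unsupported by the splitting you describe. Your treatment of item b) (moving only the exponent $a-i$ into the $D$-part, so ${\gamma}_{1}=2$ and $\gamma=4$) is fine once the counts above are corrected: there $K=(2i-1)-(2t-1)=2i-2t$.
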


\begin{proof}
Analogous to that of Theorem~\ref{main3}.
\end{proof}

\begin{remark}
One more time we call the attention that the idea of construction of
the matrix $G_{2}(D)$ is different for each of Theorems~\ref{main1},
\ref{main2}, \ref{main3} and \ref{main4}. This remark also holds for
the results shown in Subsection~\ref{subIIIC}.
\end{remark}

\subsection{Construction III}\label{subIIIC}

In this subsection we are interested in constructing AQCC's derived
from Reed-Solomon (RS) and generalized Reed-Solomon (GRS) codes. We
first deal with RS codes. Recall that a RS code over ${\mathbb
F}_{q}$ is a BCH code, of length $n=q-1$, with parameters $[n,
n-d+1, d]_{q}$, where $2\leq d\leq n$. A parity check matrix of a RS
code is given by
\begin{eqnarray*}
H_{\delta , b} =  \left[
\begin{array}{ccccc}
1 & {{\alpha}^{b}} & {{\alpha}^{2b}} & \cdots & {{\alpha}^{(n-1)b}} \\
1 & {{\alpha}^{(b+1)}} & {{\alpha}^{2(b+1)}} & \cdots & {{\alpha}^{(n-1)(b+1)}}\\
\vdots & \vdots & \vdots & \vdots & \vdots\\
1 & {{\alpha}^{(b+d-2)}} & \cdots & \cdots & {{\alpha}^{(n-1)(b+d-2)}}\\
\end{array}
\right],
\end{eqnarray*}
whose entries are in ${\mathbb F}_{q}$.

In Theorem~\ref{main5} presented in the following, we construct
AQCC's derived from RS codes:

\begin{theorem}\label{main5}
Assume that $q \geq 8$ is a prime power. Then there exist AQCC's
with parameters
\begin{itemize}
\item [ a)] $[(q-1, i-t-1, {\mu}^{*}; 3,
{[d_{z}]}_{f}/{[d_{x}]}_{f})]_{q}$, where ${(d_{z})}_{f}\geq q-i-1$,
${(d_{x})}_{f}\geq t+2$, for all $1\leq t\leq i-2$, where $3\leq i
\leq q-3$;

\item [ b)] $[(q-1, i-t, {\mu}^{*}; 2,
{[d_{z}]}_{f}/{[d_{x}]}_{f})]_{q}$, where ${(d_{z})}_{f}\geq q-i-1$,
${(d_{x})}_{f}\geq t+2$, for all $1\leq t\leq i-1$, where $2\leq i
\leq q-3$.
\end{itemize}
\end{theorem}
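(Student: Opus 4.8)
The plan is to mimic the construction of Theorem~\ref{main2} and Theorem~\ref{main3}, but starting from Reed-Solomon codes rather than narrow-sense BCH codes over $\mathbb{F}_{q^2}$. Fix $n=q-1$ and a primitive $n$-th root of unity $\alpha \in \mathbb{F}_q$. For item a), I would take the RS code $C$ whose parity check matrix $\mathcal{H}$ consists of the $i+1$ rows indexed by consecutive exponents, say $b, b+1, \ldots, b+i$ for a convenient choice of $b$; since $C$ is RS it is MDS with parameters $[q-1, q-1-(i+1), i+2]_q$ and $C^{\perp}$ is MDS with parameters $[q-1, i+1, q-i-1]_q$. All the entries are already in $\mathbb{F}_q$, so no field-expansion step is needed, which simplifies the bookkeeping compared with Construction~II.

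Next I would split $\mathcal{H}$ into the two disjoint pieces that define the memory-one codes $V_1$ and $V_2$. Following the template of Theorem~\ref{main3}, $G_1(D)$ should be built so that $V_1$ has dimension $i-1$ (grouping $i-1$ of the rows into the constant term and one ``leftover'' row into the $D$-coefficient, exactly as in the block-matrix display there), degree $\gamma_1 = 2$, and free distance $[d_1]_f \geq q-i-1$ by Theorem~\ref{A}(b) applied to the associated block codes (the relevant $C_i$'s are RS, hence their minimum distances are read off directly from MDS bounds, and the $\min\{D_0+D_\mu, d\}$ estimate gives $\geq q-i-1$). Then $G_2(D)$ is built from $t$ rows of $\mathcal{H}$ plus one extra row as the $D$-coefficient, giving a code $V_2$ of dimension $t$, degree $1$, with $V_2^{\perp}$ of parameters $(q-1, q-1-t, \mu_2^\perp, [d_2]_f^\perp \geq t+2)_q$. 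The inclusion $V_2 \subset V_1$ follows from the nesting of the row sets, hence $V_1^{\perp} \subset V_2^{\perp}$, and the CSS-type theorem then yields an AQCC with $K = (i-1) - t = i-t-1$, degree $\gamma = \gamma_1 + 1 = 3$, $(d_z)_f \geq q-i-1$ and $(d_x)_f \geq t+2$. Item b) is the same with the ``leftover'' row absorbed differently so that $\gamma_1 = 1$ and $K = i - t$, giving degree $2$; I would just say ``analogous'' as the paper does elsewhere.

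The main obstacle I anticipate is the free-distance bookkeeping, i.e.\ verifying that the particular way the rows of $\mathcal{H}$ are partitioned into the constant term and the $D$-term really does produce a \emph{reduced basic} generator matrix (so that the memory is genuinely $\mu=1$, the degree is exactly $2$ resp.\ $1$, and the encoder is noncatastrophic) and, simultaneously, that Theorem~\ref{A}(b) delivers the claimed lower bound $(d_z)_f \geq q-i-1$ rather than something weaker. The reduced-basic part is handled by the rank hypotheses of Theorem~\ref{A} ($\kappa = \operatorname{rk} H_0$ and $\operatorname{rk} H_j \leq \kappa$), which hold automatically here because the Vandermonde-type rows are linearly independent; the distance part is where one must make sure the sub-blocks whose minimum distances enter $\min\{D_0 + D_\mu, d\}$ are themselves (generalized) RS codes of the right length so the MDS property pins their distances down. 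I would also need to double-check the index ranges $1\le t\le i-2$, $3\le i\le q-3$ (resp.\ $1\le t\le i-1$, $2\le i\le q-3$) so that all the codes involved have strictly positive dimension and the designed distances stay within $[2,n]$; the constraint $q\ge 8$ is what leaves enough room for all of this.
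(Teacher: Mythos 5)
Your proposal takes essentially the same route as the paper, whose own proof simply transplants the row-splitting of Theorem~\ref{main3} to the RS parity check matrix ${\mathcal H}$ built from $i+1$ consecutive powers of $\alpha$ (no field expansion needed), obtaining $V_1=(q-1,\,i-1,\,2;\,1,\,{[d_1]}_f\geq q-i-1)_q$ and $V_2=(q-1,\,t,\,1;\,1,\,{[d_2]}_f)_q$ with $V_2\subset V_1$, and then applies the CSS-type construction to get $K=i-t-1$, $\gamma=3$, ${(d_z)}_f\geq q-i-1$, ${(d_x)}_f\geq t+2$. Two small slips in your write-up, neither fatal since you defer to the display in Theorem~\ref{main3}: for item a) \emph{two} leftover rows of ${\mathcal H}$ go into the $D$-coefficient of $G_1(D)$ (one leftover row gives dimension $i$, degree $1$, i.e.\ item b)), and the key bound ${(d_z)}_f\geq q-i-1$ comes from the $d_f\geq d^{\perp}$ half of Theorem~\ref{A}(b) (the dual of the full block RS code is MDS of distance $q-i-1$), whereas the $\min\{D_0+D_{\mu},d\}$ estimate is what delivers ${(d_x)}_f\geq t+2$ through $V_2^{\perp}$.
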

\begin{proof}
We only show Item a), since Item b) is similar. The construction is
the same as in the proof of Theorem~\ref{main3}, although the codes
have distinct parameters. More specifically, starting from a parity
check matrix ${\mathcal H}$
\begin{eqnarray*}
{\mathcal H} = \left[
\begin{array}{ccccc}
1 & {{\alpha}^{a}} & \cdots & \cdots & {{\alpha}^{(n-1)a}}\\
1 & {{\alpha}^{(a-1)}} & \cdots & \cdots & {{\alpha}^{(n-1)(a-1)}}\\
\vdots & \vdots & \vdots & \vdots & \vdots\\
1 & {{\alpha}^{(a-i+1)}} & {{\alpha}^{2(a-i+1)}} & \cdots & {{\alpha}^{(n-1)(a-i+1)}}\\
1 & {{\alpha}^{(a-i)}} & {{\alpha}^{2(a-i)}} & \cdots & {{ \alpha}^{(n-1)(a-i)}} \\
\end{array}
\right],
\end{eqnarray*}
of an $[q-1, q-i-2, i+2]_{q}$ RS code, we construct generator
matrices $G_1 (D)$ and $G_{2}(D)$ for codes $V_1$ and $V_{2}$,
respectively as per Theorem~\ref{main3}. In this context, it is easy
to see that $V_1$ is an $(q-1, i-1, 2; 1, {[d_{1}]}_{f}\geq
q-i-1)_{q}$ code, $V_1^{\perp}$ is an $(q-1, q-i, 2;
{\mu}_{1}^{\perp}, {[d_{1}]}_{f}^{\perp})_{q}$ code, $V_{2}$ is an
$(q-1, t, 1; 1, {[d_{2}]}_{f})_{q}$ and $V_{2}^{\perp}$ an $(q-1,
q-t-1, 1; {\mu}_{2}^{\perp}, {[d_{1}]}_{f}^{\perp}\geq t+2)_{q}$.
Then the corresponding CSS-type code has parameters $[(q-1, i-t-1,
{\mu}^{*}; 3, {[d_{z}]}_{f}/{[d_{x}]}_{f})]_{q}$, where
${(d_{z})}_{f}\geq q-i-1$ and ${(d_{x})}_{f}\geq t+2$.
\end{proof}

\begin{example}
By means of Theorem~\ref{main5}, one can construct AQCC's with
parameters $[(10, 4, {\mu}^{*}; 3, {[d_{z}]}_{f}\geq
4/{[d_{x}]}_{f}\geq 3)]_{11}$, $[(10, 5, {\mu}^{*}; 3,
{[d_{z}]}_{f}\geq 3/{[d_{x}]}_{f}\geq 3)]_{11}$, $[(10, 2,
{\mu}^{*}; 3, {[d_{z}]}_{f}\geq 6/{[d_{x}]}_{f}\geq 3 )]_{11}$,
$[(10, 1, {\mu}^{*}; 3, {[d_{z}]}_{f}\geq 6/{[d_{x}]}_{f}\geq 4
)]_{11}$, and so on.
\end{example}

Let us recall the definition of GRS codes. Let $n$ be an integer
such that $1\leq n\leq q$, and choose an $n$-tuple ${\bf
\zeta}=({\zeta}_{0}, \ldots, {\zeta}_{n-1})$ of distinct elements of
${\mathbb F}_{q}$. Assume that ${\bf v}=(v_{0}, \ldots, v_{n-1})$ is
an $n$-tuple of nonzero (not necessary distinct) elements of
${\mathbb F}_{q}$. For any integer $k$, $1\leq k\leq n$, consider
the set of polynomials of degree less than $k$, in ${\mathbb
F}_{q}[x]$, denoted by ${\mathcal P}_{k}$. Then we define the GRS
codes as ${\operatorname{GRS}}_{k}({\bf \zeta},{\bf
v})=\{(v_{0}f({\zeta}_{0}), v_{1}f({\zeta}_{1}), \ldots,
v_{n-1}f({\zeta}_{n-1}))| f \in{\mathcal P}_{k}\}$. It is well known
that ${\operatorname{GRS}}_{k}({\bf \zeta},{\bf v})$ is a MDS code
with parameters $[n, k, n - k + 1]_{q}$. The (Euclidean) dual
${\operatorname{GRS}}_{k}^{\perp}({\bf \zeta},{\bf v})$ of
${\operatorname{GRS}}_{k}({\bf \zeta},{\bf v})$ is also a GRS code
and ${\operatorname{GRS}}_{k}^{\perp}({\bf \zeta},{\bf
w})={\operatorname{GRS}}_{n-k}({\bf \zeta},{\bf v})$ for some
$n$-tuple ${\bf w}=(w_{0}, \ldots, w_{n-1})$ of nonzero elements of
${\mathbb F}_{q}$. A generator matrix of
${\operatorname{GRS}}_{k}({\bf \zeta},{\bf v})$ is given by
\begin{eqnarray*}
G = \left[
\begin{array}{cccc}
v_{0} & v_{1} & \cdots & v_{n-1}\\
v_{0}{\zeta}_{0} & v_{1}{\zeta}_{1} & \cdots & v_{n-1}{\zeta}_{n-1}\\
v_{0}{\zeta}_{0}^{2} & v_{1}{\zeta}_{1}^{2} & \cdots & v_{n-1}{\zeta}_{n-1}^{2}\\
\vdots & \vdots & \vdots & \vdots\\
v_{0}{\zeta}_{0}^{k-1} & v_{1}{\zeta}_{1}^{k-1} & \cdots & v_{n-1}{\zeta}_{n-1}^{k-1}\\
\end{array}
\right];
\end{eqnarray*}
a parity check matrix of ${\operatorname{GRS}}_{k}({\bf \zeta},{\bf
v})$ is
\begin{eqnarray*}
H = \left[
\begin{array}{cccc}
w_{0} & w_{1} & \cdots & w_{n-1}\\
w_{0}{\zeta}_{0} & w_{1}{\zeta}_{1} & \cdots & w_{n-1}{\zeta}_{n-1}\\
w_{0}{\zeta}_{0}^{2} & w_{1}{\zeta}_{1}^{2} & \cdots & w_{n-1}{\zeta}_{n-1}^{2}\\
\vdots & \vdots & \vdots & \vdots\\
w_{0}{\zeta}_{0}^{n-k-1} & w_{1}{\zeta}_{1}^{n-k-1} & \cdots & w_{n-1}{\zeta}_{n-1}^{n-k-1}\\
\end{array}
\right].
\end{eqnarray*}

In the next result, we construct new AQCC's derived from GRS codes.

\begin{theorem}\label{main6}
Let $q\geq 5$ be a prime power. Assume that $k \geq 1$ and $n \geq
5$ are integers such that $n\leq q$ and $k\leq n-4$. Choose an
$n$-tuple ${\bf \zeta}=({\zeta}_{0}, \ldots, {\zeta}_{n-1})$ of
distinct elements of ${\mathbb F}_{q}$ and an $n$-tuple ${\bf
v}=(v_{0}, \ldots, v_{n-1})$ of nonzero elements of ${\mathbb
F}_{q}$. Then there exists an $[(n, n-t-k-2, {\mu}^{*}; 3,
{[d_{z}]}_{f}/{[d_{x}]}_{f})]_{q}$ AQCC, where ${(d_{z})}_{f}\geq
t+2$ and ${(d_{x})}_{f}\geq k+1$, $1\leq t\leq n-k-2$.
\end{theorem}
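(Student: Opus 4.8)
The plan is to follow the construction in the proof of Theorem~\ref{main3}, replacing the BCH code used there by the GRS codes reviewed above; the only genuinely new point is that every consecutive block of rows of a GRS parity-check matrix is again the parity-check matrix of an MDS GRS code, so all the designed-distance estimates carry over verbatim. First I would take ${\mathcal H}$ to be the $(n-k)\times n$ parity-check matrix $H$ of $\operatorname{GRS}_{k}({\bf \zeta},{\bf v})$ displayed above, i.e.\ the matrix whose $s$-th row is $(w_{0}{\zeta}_{0}^{s}, w_{1}{\zeta}_{1}^{s}, \ldots, w_{n-1}{\zeta}_{n-1}^{s})$ for $s=0,1,\ldots,n-k-1$; its ${\mathbb F}_{q}$-row space is $\operatorname{GRS}_{k}^{\perp}({\bf \zeta},{\bf v})$, which is a GRS code $\operatorname{GRS}_{n-k}({\bf \zeta},{\bf w})$ with parameters $[n, n-k, k+1]_{q}$.

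Next I would split ${\mathcal H}$ exactly as in the proof of Theorem~\ref{main3}. Concretely, $G_{1}(D)$ is obtained by keeping $n-k-2$ of the rows of ${\mathcal H}$ as the coefficient of $D^{0}$ (in particular the first $t+1$ rows, those with $s=0,1,\ldots,t$) and moving the remaining two rows into the coefficient of $D$, padded with zero-rows and placed in the positions prescribed in Theorem~\ref{main3}; this makes $G_{1}(D)$ a reduced basic generator matrix of a unit-memory convolutional code $V_{1}$ of dimension $n-k-2$ and degree ${\gamma}_{1}=2$. Similarly, $G_{2}(D)$ is built from the first $t+1$ rows of ${\mathcal H}$ alone: the coefficient of $D^{0}$ consists of the rows $s=0,1,\ldots,t-1$ and the coefficient of $D$ is the row $s=t$ padded with zero-rows; this is a reduced basic generator matrix of a unit-memory code $V_{2}$ of dimension $t$ and degree ${\gamma}_{2}=1$. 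Inspecting the rows shows that every generating row of $G_{2}(D)$ is a row, or an ${\mathbb F}_{q}[D]$-combination of rows, of $G_{1}(D)$, so that $V_{2}\subset V_{1}$, and hence $V_{1}^{\perp}\subset V_{2}^{\perp}$.

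Then I would estimate the free distances by the argument behind Theorem~\ref{A}(b), now applied to GRS codes. Since ${\mathcal H}$ is a parity-check matrix of $C=\operatorname{GRS}_{k}({\bf \zeta},{\bf v})$, whose Euclidean dual has minimum distance $k+1$, the code $V_{1}$ obtained by splitting ${\mathcal H}$ satisfies ${[d_{1}]}_{f}\geq k+1$. For $V_{2}$, the first $t+1$ rows of ${\mathcal H}$ form a parity-check matrix of an MDS code $C^{(2)}=[n, n-t-1, t+2]_{q}$ by the standard Vandermonde argument, and the splitting into a $t$-row block (parity-checking an $[n,n-t,t+1]_{q}$ code) and a single row (parity-checking an $[n,n-1,2]_{q}$ code) gives, via Theorem~\ref{A}(b), ${[d_{2}]}_{f}^{\perp}\geq \min\{(t+1)+2,\ t+2\}=t+2$. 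Finally I would apply the CSS-type construction to the pair $V_{2}\subset V_{1}$, taking as stabilizer matrix the block-diagonal matrix with a parity-check matrix $H_{1}(D)$ of $V_{1}^{\perp}$ and $G_{2}(D)$ on the diagonal; the resulting AQCC has frame size $n$, $K=\dim V_{1}-\dim V_{2}=n-t-k-2$ logical qudits per frame, degree ${\gamma}_{1}+{\gamma}_{2}=3$, and, combining the two distance bounds, ${(d_{x})}_{f}\geq k+1$ and ${(d_{z})}_{f}\geq t+2$ (interchanging the two if needed, as in the Remark). The hypotheses $q\geq 5$, $5\leq n\leq q$, $1\leq k\leq n-4$ and $1\leq t\leq n-k-2$ are precisely what is needed for $n$ distinct elements ${\zeta}_{j}\in{\mathbb F}_{q}$ to exist, for $\dim V_{2}=t\geq 1$, and for $\dim V_{1}=n-k-2\geq t$ to hold.

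The main obstacle is the bookkeeping in the row-splitting: the two rows moved into the coefficient of $D$ in $G_{1}(D)$ must be placed so as to simultaneously keep $G_{1}(D)$ and $G_{2}(D)$ basic and reduced (hence non-catastrophic, of memory $1$, and of degrees exactly $2$ and $1$) and to force $V_{2}\subset V_{1}$. This is precisely the arrangement carried out in the proof of Theorem~\ref{main3}, so it transcribes with only notational changes; the one genuinely new verification --- that every consecutive block of rows of a GRS parity-check matrix is itself the parity-check matrix of an MDS GRS code --- is immediate from the Vandermonde structure and the MDS property of generalized Reed--Solomon codes.
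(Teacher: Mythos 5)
Your proposal is correct and follows essentially the same route as the paper: split the parity-check matrix of ${\operatorname{GRS}}_{k}({\bf \zeta},{\bf v})$ into a degree-zero block plus two rows carrying a factor $D$ to obtain reduced basic generator matrices of unit-memory codes $V_{2}\subset V_{1}$ with $\dim V_{1}=n-k-2$, $\dim V_{2}=t$, degrees $2$ and $1$, bound the free distances via Theorem~\ref{A}(b) and the MDS property of GRS codes, and then apply the CSS-type construction. The only deviation is bookkeeping: you keep the rows $s=0,\ldots,t$ as constant rows and shift two later rows into the $D$-coefficient (so $V_{2}\subset V_{1}$ holds via an ${\mathbb F}_{q}[D]$-combination), whereas the paper, as in Theorem~\ref{main3}, attaches the rows with exponents $t$ and $n-k-1$ as $D$-terms to two constant rows so that the first row of $G_{2}(D)$ appears verbatim in $G_{1}(D)$; both arrangements give the same parameters and distance bounds.
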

\begin{proof}
Let
\begin{eqnarray*}
H = \left[
\begin{array}{cccc}
w_{0} & w_{1} & \cdots & w_{n-1}\\
w_{0}{\zeta}_{0} & w_{1}{\zeta}_{1} & \cdots & w_{n-1}{\zeta}_{n-1}\\
w_{0}{\zeta}_{0}^{2} & w_{1}{\zeta}_{1}^{2} & \cdots & w_{n-1}{\zeta}_{n-1}^{2}\\
\vdots & \vdots & \vdots & \vdots\\
w_{0}{\zeta}_{0}^{n-k-1} & w_{1}{\zeta}_{1}^{n-k-1} & \cdots & w_{n-1}{\zeta}_{n-1}^{n-k-1}\\
\end{array}
\right].
\end{eqnarray*}
be a parity check matrix of an ${\operatorname{GRS}}_{k}({\bf
\zeta},{\bf v})$ code. We split $H$ to form polynomial matrices $G_1
(D)$ and $G_{2}(D)$ of codes $V_1$ and $V_{2}$, respectively, as
follows:
\begin{eqnarray*}
G_{1}(D) = \left[
\begin{array}{cccc}
w_{0}{\zeta}_{0}^{n-k-3} & w_{1}{\zeta}_{1}^{n-k-3} & \cdots & w_{n-1}{\zeta}_{n-1}^{n-k-3}\\
w_{0} & w_{1} & \cdots & w_{n-1}\\
w_{0}{\zeta}_{0} & w_{1}{\zeta}_{1} & \cdots & w_{n-1}{\zeta}_{n-1}\\
\vdots & \vdots & \vdots & \vdots\\
w_{0}{\zeta}_{0}^{t-1} & w_{1}{\zeta}_{1}^{t-1} & \cdots & w_{n-1}{\zeta}_{n-1}^{t-1}\\
- & - & - & -\\
w_{0}{\zeta}_{0}^{t+1} & w_{1}{\zeta}_{1}^{t+1} & \cdots & w_{n-1}{\zeta}_{n-1}^{t+1}\\
\vdots & \vdots & \vdots & \vdots\\
w_{0}{\zeta}_{0}^{n-k-2} & w_{1}{\zeta}_{1}^{n-k-2} & \cdots & w_{n-1}{\zeta}_{n-1}^{n-k-1}\\
\end{array}
\right]+\\ \left[
\begin{array}{cccc}
w_{0}{\zeta}_{0}^{n-k-1} & w_{1}{\zeta}_{1}^{n-k-1} & \cdots & w_{n-1}{\zeta}_{n-1}^{n-k-1}\\
w_{0}{\zeta}_{0}^{t} & w_{1}{\zeta}_{1}^{t} & \cdots & w_{n-1}{\zeta}_{n-1}^{t}\\
0 & 0 & 0 & 0\\
\vdots & \vdots & \vdots & \vdots\\
0 & 0 & 0 & 0\\
- & - & - & -\\
0 & 0 & 0 & 0\\
\vdots & \vdots & \vdots & \vdots\\
0 & 0 & 0 & 0\\
\end{array}
\right]D
\end{eqnarray*}
and
\begin{eqnarray*}
G_{2}(D) = \left[
\begin{array}{cccc}
w_{0} & w_{1} & \cdots & w_{n-1}\\
w_{0}{\zeta}_{0} & w_{1}{\zeta}_{1} & \cdots & w_{n-1}{\zeta}_{n-1}\\
w_{0}{\zeta}_{0}^{2} & w_{1}{\zeta}_{1}^{2} & \cdots & w_{n-1}{\zeta}_{n-1}^{2}\\
\vdots & \vdots & \vdots & \vdots\\
w_{0}{\zeta}_{0}^{t-1} & w_{1}{\zeta}_{1}^{t-1} & \cdots & w_{n-1}{\zeta}_{n-1}^{t-1}\\
\end{array}
\right]+\\ \left[
\begin{array}{cccc}
w_{0}{\zeta}_{0}^{t} & w_{1}{\zeta}_{1}^{t} & \cdots & w_{n-1}{\zeta}_{n-1}^{t}\\
0 & 0 & 0 & 0\\
\vdots & \vdots & \vdots & \vdots\\
0 & 0 & 0 & 0\\
\end{array}
\right]D,
\end{eqnarray*}
where ${\bf w}=(w_{0}, \ldots, w_{n-1})$ is a vector such that
${\operatorname{GRS}}_{k}^{\perp}({\bf \zeta},{\bf
w})={\operatorname{GRS}}_{n-k}({\bf \zeta},{\bf v})$. The code $V_1$
has parameters $(n, n-k-2, 2; 1, {[d_{1}]}_{f}\geq k+1)_{q}$ and
$V_{1}^{\perp}$ has parameters $(n, k+2, 2; {\mu}_{1}^{\perp},
{[d_{1}]}_{f}^{\perp})_{q}$. Similarly, $V_{2}$ is an $(n, t, 1; 1,
{[d_{2}]}_{f})_{q}$ code and $V_{2}^{\perp}$ is an $(n, n-t, 1;
{\mu}_{2}^{\perp}, {[d_{1}]}_{f}^{\perp}\geq t+2)_{q}$ code. Then
there exists an $[(n, n-t-k-2, {\mu}^{*}; 3,
{[d_{z}]}_{f}/{[d_{x}]}_{f})]_{q}$ code, where ${(d_{z})}_{f}\geq
t+2$ and ${(d_{x})}_{f}\geq k+1$.
\end{proof}

\begin{example}
From Theorem~\ref{main6}, we can construct AQCC's with parameters
$[(5, 1, {\mu}^{*}; 3, {[d_{z}]}_{f}\geq 3/{[d_{x}]}_{f}\geq
2)]_{5}$, $[(7, 1, {\mu}^{*}; 3, {[d_{z}]}_{f}\geq
4/{[d_{x}]}_{f}\geq 3)]_{7}$, $[(8, 1, {\mu}^{*}; 3,$ $
{[d_{z}]}_{f}\geq 5/{[d_{x}]}_{f}\geq 3)]_{8}$, $[(17, 7, {\mu}^{*};
3, {[d_{z}]}_{f}\geq 7/{[d_{x}]}_{f}\geq 4)]_{17}$, $[(17, 7,
{\mu}^{*}; 3, {[d_{z}]}_{f}\geq 6/{[d_{x}]}_{f}\geq 5)]_{17}$,
$[(17, 6, {\mu}^{*}; 3, {[d_{z}]}_{f}\geq 7/{[d_{x}]}_{f}\geq
5)]_{17}$, $[(17, 4, {\mu}^{*}; 3, {[d_{z}]}_{f}\geq
9/{[d_{x}]}_{f}\geq 5)]_{17}$ and so on.
\end{example}

\begin{theorem}\label{main8}
Let $q\geq 5$ be a prime power. Assume that $k \geq 1$ and $n \geq
5$ are integers such that $n\leq q$ and $k\leq n-4$. Choose an
$n$-tuple ${\bf \zeta}=({\zeta}_{0}, \ldots, {\zeta}_{n-1})$ of
distinct elements of ${\mathbb F}_{q}$ and an $n$-tuple ${\bf
v}=(v_{0}, \ldots, v_{n-1})$ of nonzero elements of ${\mathbb
F}_{q}$. Then an $[(n, n-t-k-1, {\mu}^{*}; 2,
{[d_{z}]}_{f}/{[d_{x}]}_{f})]_{q}$ AQCC, where ${(d_{z})}_{f}\geq
t+2$, ${(d_{x})}_{f}\geq k+1$ and $1\leq t\leq n-k-1$ can be
constructed.
\end{theorem}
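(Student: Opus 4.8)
The plan is to mirror the proof of Theorem~\ref{main6} almost verbatim, since the only difference between the two statements is that here the matrix $G_2(D)$ contributes one fewer dimension to the overall construction, yielding degree $\gamma = 2$ instead of $3$ and logical dimension $n-t-k-1$ instead of $n-t-k-2$. First I would start from the same parity check matrix $H$ of the code ${\operatorname{GRS}}_{k}({\bf \zeta},{\bf v})$ displayed in the proof of Theorem~\ref{main6}, which has $n-k$ rows $w_{0}{\zeta}_{0}^{j}, \ldots$ for $j=0,1,\ldots,n-k-1$, and then split it into two polynomial generator matrices $G_1(D)$ and $G_2(D)$. The key change is to build $G_2(D)$ as a \emph{unit-memory} code of degree $1$ that is a constant (degree-zero) matrix rather than a genuine memory-$1$ matrix: concretely, take $G_2(D)$ to consist of the $t$ rows $w_{0}{\zeta}_{0}^{j}, \ldots$ for $j=0,\ldots,t-1$ with no $D$-term, so that $V_2$ is an $(n, t, 0; 0, \ldots)_q$ code, i.e.\ essentially a block code of dimension $t$, whose dual $V_2^{\perp}$ has parameters $(n, n-t, 0; \ldots)_q$ with free distance $\geq t+2$ by the MDS property of the GRS code. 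Meanwhile $G_1(D)$ is built exactly as in Theorem~\ref{main6} so that $V_1$ is an $(n, n-k-2, 2; 1, {[d_1]}_f \geq k+1)_q$ code with $V_1^{\perp}$ having parameters $(n, k+2, 2; \ldots)_q$.

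Next I would verify the two structural facts needed for the CSS-type construction: that $G_1(D)$ and $G_2(D)$ are reduced basic generator matrices (this follows from Theorem~\ref{A}(a) applied to the splitting, exactly as in the previous proofs, using that the top block has full row rank and the shifted blocks have rank bounded by it), and that $V_2 \subset V_1$, equivalently $V_1^{\perp} \subset V_2^{\perp}$. The containment $V_2 \subset V_1$ holds by construction because every row of $G_2(D)$ appears among the rows of $G_1(D)$ (the rows $j=0,\ldots,t-1$ are common to both, and $G_1$ additionally incorporates the higher-degree rows $j=t,\ldots,n-k-1$ together with the extra $D$-shifted rows). Once the containment and the basic-reduced property are in hand, I would invoke the CSS-like Construction (the theorem of Steane--Calderbank et al.\ quoted above) with the stabilizer matrix
\begin{eqnarray*}
\left(
\begin{array}{ccc}
H_1(D) & | & 0\\
0 & | & G_2(D)\\
\end{array}
\right),
\end{eqnarray*}
where $H_1(D)$ is a parity check matrix of $V_1^{\perp}$.

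Then the parameters fall out by bookkeeping: the logical dimension is $K = \dim V_1 - \dim V_2 = (n-k-2) - t = n-t-k-1$; the degree is $\gamma = \gamma_1 + \gamma_2 = 2 + 0 = 2$; and the asymmetric distances are ${(d_x)}_f \geq {[d_1]}_f \geq k+1$ (free distance of $V_1$, bounded below via Theorem~\ref{A}(b) and the MDS structure of the GRS code) and ${(d_z)}_f \geq {[d_2]}_f^{\perp} \geq t+2$ (free distance of $V_2^{\perp}$, again from the GRS/MDS bound). The range $1 \leq t \leq n-k-1$ is exactly what keeps $V_2$ nontrivial and keeps $G_1(D)$ well-defined (we need $t \leq n-k-1$ so that the row $w_0{\zeta}_0^{t}, \ldots$ still lies among the first $n-k$ powers), while $k \leq n-4$ and $n \geq 5$ ensure $V_1$ has the stated dimension $n-k-2 \geq 2$.

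I expect the main obstacle to be not any deep argument but the careful justification that $G_2(D)$ — now a degree-$0$ matrix — still fits the framework: one must check that a ``memory $0$'' convolutional code behaves correctly in the CSS-type theorem (it is just a block code promoted to a convolutional code), and that the free distance of its dual really is the block minimum distance $\geq t+2$ coming from the GRS code ${\operatorname{GRS}}_{n-t}$, rather than something smaller. A secondary point requiring care is confirming that the splitting of $H$ into $G_1(D)$ still produces a \emph{reduced} basic matrix when $t$ is near its extreme values $1$ or $n-k-1$, so that the degree is genuinely $2$ and not accidentally larger; this is where I would be most careful to match the row-rank hypotheses of Theorem~\ref{A}(a). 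Since Theorem~\ref{main8} is explicitly flagged in the introduction as being proved ``as in'' Theorem~\ref{main6}, I anticipate the author simply writes ``Analogous to the proof of Theorem~\ref{main6}, replacing $G_2(D)$ by its constant part,'' and indeed that is the proof I would give.
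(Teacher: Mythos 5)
Your plan of keeping $G_1(D)$ exactly as in Theorem~\ref{main6} and stripping the $D$-term from $G_2(D)$ does not deliver the stated parameters, for three concrete reasons. First, the containment $V_2\subset V_1$ fails: in the matrix $G_1(D)$ of Theorem~\ref{main6} the row $(w_0,\ldots,w_{n-1})$ occurs only welded to its $D$-companion, i.e.\ as the polynomial row $(w_j)_j+(w_j{\zeta}_j^{t})_jD$, and since no other generator of $V_1$ can cancel that $D$-part (the rows of $H$ are linearly independent), the constant vector $(w_0,\ldots,w_{n-1})$ does not lie in $V_1$; so the first generator of your degree-zero $V_2$ is not contained in $V_1$, contrary to your claim that ``every row of $G_2(D)$ appears among the rows of $G_1(D)$.'' Second, even granting the construction, $\dim V_1-\dim V_2=(n-k-2)-t=n-t-k-2$, not $n-t-k-1$ as you wrote; you would only reproduce the logical dimension of Theorem~\ref{main6}, not that of Theorem~\ref{main8}. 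Third, the distance claim for $V_2^{\perp}$ is wrong: if $V_2$ is the block code generated by the rows $j=0,\ldots,t-1$, it is a ${\operatorname{GRS}}_{t}$ code whose dual is an $[n,n-t,t+1]_q$ MDS code, so you only get ${(d_z)}_f\geq t+1$. The extra unit in Theorem~\ref{main6} comes precisely from the $D$-term row $(w_j{\zeta}_j^{t})_j$, which via Theorem~\ref{A}(b) pushes the dual free distance up to $t+2$; deleting it loses exactly that.

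The ``similar'' proof intended by the paper goes the other way around (compare items a) and b) of Theorems~\ref{main3} and~\ref{main5}): keep $G_2(D)$ exactly as in Theorem~\ref{main6} (constant part the rows $j=0,\ldots,t-1$, $D$-part the row $j=t$ above zero rows), and instead modify $G_1(D)$ so that it uses all $n-k$ rows of $H$ with a single row carrying a $D$: constant part the rows $j=0,\ldots,t-1,t+1,\ldots,n-k-1$, $D$-part the row $j=t$ above zero rows. Then $V_1$ is an $(n,n-k-1,1;1,{[d_1]}_f\geq k+1)_q$ code (Theorem~\ref{A} applies, the underlying block code being ${\operatorname{GRS}}_{k}({\bf \zeta},{\bf v})$ with dual distance $k+1$), $V_2\subset V_1$ because now every polynomial row of $G_2(D)$ literally is a row of $G_1(D)$, and $V_2^{\perp}$ has free distance at least $t+2$ just as in Theorem~\ref{main6}. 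The CSS-type code then has $K=(n-k-1)-t=n-t-k-1$ and degree ${\gamma}_1+{\gamma}_2=1+1=2$, which is exactly the statement of Theorem~\ref{main8}; in your version both the dimension and the $d_z$-bound fall one short.
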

\begin{proof}
Similar to that of Theorem~\ref{main6}.
\end{proof}

\begin{example}
From Theorem~\ref{main8}, we obtain AQCC's with parameters $[(5, 1,
{\mu}^{*}; 2,$ ${[d_{z}]}_{f}\geq 4/{[d_{x}]}_{f}\geq 2)]_{5}$,
$[(7, 2, {\mu}^{*}; 2, {[d_{z}]}_{f}\geq 4/{[d_{x}]}_{f}\geq
3)]_{7}$, $[(7, 2, {\mu}^{*}; 2, {[d_{z}]}_{f}\geq
5/{[d_{x}]}_{f}\geq 2)]_{7}$, $[(7, 1, {\mu}^{*}; 2,
{[d_{z}]}_{f}\geq 5/{[d_{x}]}_{f}\geq 3)]_{7}$.
\end{example}


\section{Discussion}\label{IV}

Our main result is Theorem~\ref{main1}, which establishes a general
technique of construction for AQQC's. Subsection~\ref{subIIIB} is
concerned with constructions of AQQC's derived from classical
MDS-convolutional BCH codes and, in Subsection~\ref{subIIIC}, we
address the construction of AQQC's derived from classical
Reed-Solomon and generalized Reed-Solomon convolutional codes. It is
interesting to note that the choice of matrices $G_1(D)$ and
$G_{2}(D)$ was based on the fact that the corresponding (classical)
convolutional codes must be non-catastrophic, with great dimension
and minimum distances.

As was mentioned previously, this is the first work available in
literature dealing with constructions of asymmetric quantum
convolutional codes. Moreover, by applying algebraic techniques, we
have derived several families of such codes, and not only few codes
with specific parameters. However, much research remains to be done
in the area of AQCC's. In fact, there is no bound for the respective
free distances nor relationships among the parameters of AQCC's.
Other impossibility is to compare the parameters of the new AQCC's
with the ones displayed in literature, i.e., our codes have
parameters quite distinct of the QCC's available in literature. This
area of research needs much investigation, since it was introduced
recently (see \cite{Ollivier:2003}) in literature. Additionally,
even in the case of constructions of good QCC's, only few works are
displayed in literature
\cite{Forney:2007,Klapp:2007,LaGuardia:2014,LaGuardia:2014A}.
Moreover, the unique bound known in literature even for QCC's is the
generalized quantum Singleton bound (GQSB), introduced by
Klappenecker \emph{et al.} (see \cite{Klapp:2007}).

For future works, it will be interesting to establish analogous
results to (asymmetric) quantum generalized Singleton bound (see
\cite{Klapp:2007}), the (asymmetric) sphere packing bound among
other.

\section{Summary}\label{V}
We have constructed the first families of asymmetric quantum
convolutional codes available in literature. These new AQCC's are
derived from suitable families of classical convolutional codes with
good parameters, which have been also constructed in this paper. Our
codes have great asymmetry. Additionally, great variety of distinct
types of codes have also been presented. However, much work remains
to be done in order to find bounds for AQCC's as well as for the
development of such area of research.

\section*{Acknowledgment}
This research has been partially supported by the Brazilian Agencies
CAPES and CNPq.

\small

%

\end{document}